\newtheorem{thm}{Theorem}[section]
\newtheorem{lem}[thm]{Lemma}
\newtheorem{defn}[thm]{Definition}
\newtheorem{rem}[thm]{Remark}
\newcommand{\E}{\textrm{\textbf{E}}}
\newcommand{\mcA}{\mathcal{A}}
\newcommand{\valpha}{\vec{\alpha}}
\begin{document}
\title{An Oblivious $O(1)$-Approximation for Single Source
Buy-at-Bulk}
\author{Ashish Goel
\thanks{Departments of Management Science and Engineering, and by courtesy,
Computer Science, Stanford University. Email: ashishg@stanford.edu.
Research supported by an NSF ITR grant and the Stanford-KAUST alliance for academic excellence.}
\\Stanford University
\and Ian Post
\thanks{Department of Computer Science, Stanford University. Email: itp@stanford.edu.
Research supported by an NSF ITR grant and the Stanford-KAUST alliance for academic excellence.}
\\Stanford University}

\maketitle
\thispagestyle{empty}
\pdfbookmark[1]{Abstract}{MyAbstract}
\begin{abstract}
We consider the single-source (or single-sink) buy-at-bulk problem with an
unknown concave cost function.  We want to route a set of demands along a
graph to or from a designated root node, and the cost of routing $x$ units
of flow along an edge is proportional to some concave, non-decreasing
function $f$ such that $f(0) = 0$.  We present a polynomial time algorithm
that finds a distribution over trees such that the expected cost of a tree
for any $f$ is within an $O(1)$-factor of the optimum cost for that $f$. The
previous best simultaneous approximation for this problem, even ignoring
computation time, was $O(\log |\mathcal{D}|)$, where $\mathcal{D}$ is the
multi-set of demand nodes.

We design a simple algorithmic framework using the ellipsoid method that
finds an $O(1)$-approximation if one exists, and then construct a separation
oracle using a novel adaptation of the Guha, Meyerson, and Munagala
\cite{guha2001cfa} algorithm for the single-sink buy-at-bulk problem that
proves an $O(1)$ approximation is possible for all $f$. The number of trees
in the support of the distribution constructed by our algorithm is at most
$1+\log |{\mathcal D}|$.
\end{abstract}
\newpage

\section{Introduction}

We study the single-source (or single-sink) buy-at-bulk network design problem with an unknown concave cost function. 
We are given
an undirected graph $G=(V,E)$ with edge lengths $l_e$ and a set of
demand nodes $\mathcal{D} \subseteq V$ with integer demands $d_v$ and 
want to route these demands to a
designated root node $r$ as cheaply as possible, where the cost of routing
along a particular edge is proportional to some function $f$ of the amount of
flow sent along the edge. 
In many applications it is natural to assume that
$f$ is a concave, non-decreasing function such that $f(0) = 0$, capturing the
case where we benefit from some kind of economy of scale when aggregating
flows together.  We call such functions \emph{aggregation functions} and
define $\mathcal{F}$ as the set of all aggregation functions.

When the function $f$ is given, the problem becomes the well-studied
single-sink buy-at-bulk (SSBaB) problem.  SSBaB is $NP$-hard, since it
contains the Steiner tree problem as a special case.  The problem was
introduced by Salman et al.~\cite{salman1997bbn} who gave algorithms for
special cases.  Awerbuch and Azar \cite{awerbuch1997bbn} gave an $O(\log^2
n)$-approximation using metric tree embedding, which subsequently improved to
$O(\log n)$ using better metric embeddings~\cite{bartal1998aam,
fakcharoenphol2003tba}.  Building on their own work on hierarchical facility
location~\cite{guha2000hpa}, Guha, Meyerson, and Munagala (GMM) gave the first
constant-factor approximation \cite{guha2001cfa}, an algorithm that features
prominently in our results.  Recent work \cite{talwar2002ssb, gupta2003sab,
jothi2004iaa, grandoni2006ias} has reduced the approximation ratio to 24.92
and also provided an elegant cost-sharing framework for thinking about this
problem.

However, for some applications we may want to assume that $f$ is unknown or is
known to vary over time. For instance, we may be aggregating observations in a
sensor network where we do not know the amount of redundancy among different
observations or where the redundancy is known to change. In this setting, it
is desirable to find a solution that is robust to changes in $f$ and provides
a constant-factor approximation {\em simultaneously} for all $f \in
\mathcal{F}$.  Moreover, from a purely theoretical perspective, the existence
of a good algorithm that is independent of $f$ reveals non-trivial structure
in the problem.

We will focus on randomized algorithms.
Given the concavity of $f$, we may assume without loss of
generality that the optimal routing graph is a tree.  Let $\mathcal{T}$ be the
set of all trees in $G$ spanning $\mathcal{D}$ and $r$, and let $T_f^*$ be the
optimal tree for some fixed $f$.  We use the shorthand $f(T)$ to denote the
cost of $T$ under $f$, i.e.\ $\sum_e l_ef(x_{T,e})$ where $x_{T,e}$ is the
amount of flow tree $T$ routes on edge $e$.  
There are two natural objectives which capture simultaneous approximation for multiple cost functions.  
First, we can try to minimize
\begin{equation}
R_1 = \max_{f \in \mathcal{F}} \frac{\E [f(T)]}{f(T_f^*)}
\label{max_exp}
\end{equation}
which essentially gives a distribution over trees such that in expectation,
each function $f$ is well-approximated.  Second, and much more difficult, we
can look for an algorithm that uses the objective
\begin{equation}
R_2 = \E \left[\max_{f \in \mathcal{F}} \frac{f(T)}{f(T_f^*)}\right]
\label{exp_max}
\end{equation}
A bound on \eqref{exp_max} subsumes \eqref{max_exp} and proves there exists a
single tree that is simultaneously good for all $f$. 
We call $R_1$ the {\em oblivious} approximation ratio and $R_2$ the {\em simultaneous} approximation ratio.
In this paper, we will
work with the weaker, oblivious objective \eqref{max_exp}.

Both objectives have been studied in the literature.  The tree embeddings used by Awerbuch and Azar
\cite{awerbuch1997bbn} give an $O(\log^2 n)$ oblivious approximation,
which was later reduced to $O(\log n)$ \cite{bartal1998aam,
fakcharoenphol2003tba}.  Goel and Estrin \cite{goel2003soc} improved this to
$O(\log |\mathcal{D}|)$ and also prove the same bound on the stronger
simultaneous objective.  Gupta et al.\ \cite{gupta2006ond} achieve a
$O(\log^2 n)$ oblivious approximation for a generalization where
both the function and the demands are unknown.  Khuller et al.\
\cite{khuller1995bms} studied special case of simultaneously approximating
$f(x) = x$ and $f(x) = 1$ for $x \ge 1$, i.e.\ the shortest-path and Steiner
trees, and prove an $O(1)$ simultaneous approximation.  These 2 functions
constitute opposite extremes of functions in $\mathcal{F}$, and one may wonder
if an $O(1)$ approximation for these 2 functions also works for all $f \in
\mathcal{F}$ lying ``in-between''.  However, it is not difficult to construct
a graph and a set of demands such that the shortest-path and Steiner trees are
identical, but this tree is an $\omega(1)$-approximation for other $f \in
\mathcal{F}$.  Enachescu et al.\ \cite{enachescu2005sfa} achieve an $O(1)$
simultaneous value but only for grid graphs, assuming spatial
correlation among nearby nodes. This naturally leads to the following questions:

\begin{quote}
Is $R_1 = O(1)$ achievable? If yes, is there a polynomial algorithm that
guarantees $R_1 = O(1)$?
\end{quote}

We answer both questions in the affirmative. We first write a simple LP
formulation of the problem and show that using the ellipsoid method on the
dual we can find an $O(1)$ approximation to the optimal ratio, whatever it
happens to be for a given problem instance. We also show that given an
appropriate separation oracle the optimum is constant and compute an explicit
distribution over $1+\lceil\log \left(\sum_v d_v\right)\rceil$ trees in polynomial time. This general
approach is along the lines of small metric tree embeddings~\cite{ccggp:tree98} and
oblivious congestion minimization~\cite{h:congestion08}.

Our key result is the construction of the necessary separation oracle
subroutine, running in polynomial time, that proves a constant is achievable.
We build our oracle around the GMM algorithm for SSBaB, using a modified
analysis to solve a different problem in which we bound the cost of the GMM
tree by a combination of different trees under different cost functions. 
\subsection{Organization of the Paper}

In Section \ref{algorithm_section} we present an LP formulation and a
framework using an approximate separation oracle that finds a
constant-factor approximation to the optimal oblivious approximation ratio.
In Section \ref{oracle_section} we present our primary result, which proves
the oblivious approximation ratio is constant and constructs the separation
oracle required by Section \ref{algorithm_section} assuming some extra
conditions on the input, and in Section \ref{regularization_section} we
complete the proof by showing those extra assumptions can be removed. We
conclude with some open problems (including whether $R_2 = O(1)$ can be
achieved).

\section{LP Formulation and Algorithm Framework}
\label{algorithm_section}

Let $R_1$ be the worst-case optimal oblivious ratio, i.e.\
\[
R_1 = \max_{G,l,\mathcal{D},r} \min_{\mathcal{M}} \max_{f} \frac{\E_{T \sim \mathcal{M}}[f(T)]}{f(T_f^*)}
\]
where $\mathcal{M}$ is a distribution over $\mathcal{T}$.  In this section we discuss the problem of finding an $O(1)$-oblivious approximation if one exists.  

By losing a factor of $2$ in the approximation ratio we can restrict our analysis to a smaller class of aggregation functions.
Let $D = 2^{\lceil\log(\sum_v d_v)\rceil}$, the total amount of demand rounded up to the nearest power of 2.  We never route more than $D$ flow on any edge, and $d_v$ is integral, so we only care about $f(x)$ for integers $0 \le x \le D$.  Suppose $f \in \mathcal{F}$, and $2^i < x < 2^{i+1}$.  By the monotonicity of $f$, $f(2^i) \le f(x) \le f(2^{i+1})$, and by the concavity of $f$, $f(2^{i+1}) \le 2f(2^i)$, so with a loss of a factor of 2 we can interpolate between $f(2^i)$ and $f(2^{i+1})$ and assume $f$ is piecewise linear with breakpoints only at powers of 2.  Let $A_i(x) = \min\{x,2^i\}$ and $T_i^*$ the optimal aggregation tree for $A_i$.  We call $A_i(x)$ the $i$-th atomic function following the terminology of Goel and Estrin \cite{goel2003soc}, and it is easy to see that any $f \in \mathcal{F}$ that is linear between successive powers of 2 can be written as a linear combination of $\{ A_i \}_{0\le i \le \log D}$.  Therefore, it suffices to design an algorithm $\mathcal{A}$ minimizing $\max_{i} \E_\mcA[A_i(T_{\mcA})]/A_i(T_i^*)$.

Our algorithm makes use of the standard SSBaB problem where $f$ is known.  We assume that $f$ is given in the form of a set of $K$ pipes $\{(\sigma_k, \delta_k)\}_{0 \le k \le K-1}$, where the cost of routing $x$ flow on pipe $k$ is equal to $\sigma_k + x\delta_k$.  Then $f(x)$ is defined as the cost of using the cheapest pipe for $x$ flow: $\min_k \sigma_k + x\delta_k$.  We assume that $\sigma_0 \le \sigma_1 \le \cdots \le \sigma_{K-1}$, and by concavity we can assume $\delta_0 \ge \delta_1 \ge \cdots \ge \delta_{K-1}$.  Define $u_k = \frac{\sigma_k}{\delta_k}$, the point at which the cost due to $\delta_kx$ begins to outweigh the cost due to $\sigma_k$.   We call $u_k$ the \emph{capacity} of pipe $k$; the name arises from an alternate formulation
(equivalent up to a factor of 2)
of SSBaB where pipes have a fixed cost $\sigma_k$ for a fixed capacity $u_k$.  Let $\pi_{BaB}$ be the best-known approximation ratio for SSBaB.  Currently $\pi_{BaB} = 24.92$ using an algorithm by Grandoni and Italiano~\cite{grandoni2006ias}.

We also employ an approximation algorithm for a special case of SSBaB, the single-sink rent-or-buy (SSRoB) problem.  Here $f(x)$ is characterized by 2 pipes: $(0,1)$ and $(M,0)$, i.e.\ we can pay $x$ to route $x$ flow or pay $M$ to route any amount of flow.  Let $\pi_{RoB}$ be the best-known SSRoB approximation ratio.  Eisenbrand et al.~\cite{eisenbrand2008acf} give a $2.92$-approximation.

If we can calculate $A_i(T)$ and $A_i(T_i^*)$ for every $i$ and $T \in \mathcal{T}$ then the following linear program finds the optimal distribution of trees.
\begin{equation}\label{primalLP}
\begin{array}{rlll}
\min & \theta & &\\
\textrm{s.t.} & & \sum_{T \in \mathcal{T}} x_T &\ge 1 \\ 
&  \forall 0 \le i \le \log D, & \theta A_i(T_i^*) - \sum_{T \in \mathcal{T}}x_T A_i(T) &\ge 0 \\
& & x,\theta & \ge 0 \\
\end{array}
\end{equation}
In other words, we want a distribution $\{ x_T\}_{T \in \mathcal{T}}$ of trees minimizing $\max_i \frac{\sum_T x_tA_i(T)}{A_i(T_i^*)}$.   However, this approach is not directly tractable, as $T_i^*$ is $NP$-hard to find, and $|\mathcal{T}|$ is exponentially large.

We solve an SSRoB approximation for each $A_i$ to get $A_i(\tilde{T_i})$---a $\pi_{RoB}$-approximation---and replace $A_i(T_i^*)$ with $A_i(\tilde{T_i})$ in the constraints, so that all quantities in the LP are polynomial-time computable.  
Now consider the dual of \eqref{primalLP}, which is given by
\begin{equation}\label{dualLP}
\begin{array}{rlll}
\max & \beta & &\\
\textrm{s.t.} & &\sum_{i = 0}^{\log D} \alpha_iA_i(\tilde{T_i}) &\le 1 \\
&  \forall T \in \mathcal{T} & \beta - \sum_{i = 0}^{\log D}\alpha_iA_i(T) &\le 0 \\
& & \alpha,\beta & \ge 0 \\
\end{array}
\end{equation}
With an approximate separation oracle for the dual \eqref{dualLP}, we can approximate the solution in polynomial time using the ellipsoid method, and then transform it into an approximate solution to the primal \eqref{primalLP}.  More formally:

\begin{thm} \label{first_alg_thm} With a randomized $\pi_{BaB}$-approximation to SSBaB, we can find a $2\pi_{RoB}\pi_{BaB}R_1$-approximation in expectation to the primal LP \eqref{primalLP} that runs in polynomial time with high-probability.
\end{thm}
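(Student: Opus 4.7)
The plan is to dualize the primal LP \eqref{primalLP} and to approximately solve the dual \eqref{dualLP} with the ellipsoid method, using the given $\pi_{BaB}$-approximation to SSBaB as an approximate separation oracle; a primal solution is then recovered from the polynomial-size set of constraints returned by the oracle during the ellipsoid run.

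The first dual constraint $\sum_i \alpha_i A_i(\tilde T_i)\le 1$ can be checked directly. The crux is the family of tree constraints, and the key observation is that for any nonnegative $\valpha$ the function $f_\valpha(x)=\sum_i \alpha_i A_i(x)$ is a nonnegative combination of atomic functions and therefore lies in $\mathcal{F}$ with a polynomial-size pipe representation (one breakpoint at each power of $2$ up to $D$). Separating $\beta\le \sum_i \alpha_i A_i(T)$ therefore reduces to SSBaB with cost $f_\valpha$: run the $\pi_{BaB}$-approximation to obtain $T_\valpha$ with $f_\valpha(T_\valpha)\le \pi_{BaB}\min_T f_\valpha(T)$, and return $T_\valpha$ as a violated cut whenever $\beta>f_\valpha(T_\valpha)$; otherwise declare feasibility, which certifies $\beta\le \pi_{BaB}\min_T f_\valpha(T)$. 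Every returned cut is a genuine violated dual constraint, so the ellipsoid method runs correctly and terminates after polynomially many iterations with a polynomial-size set $\mathcal{T}'$ of trees (the oracle's returns). I then solve the primal restricted to $\{x_T:T\in\mathcal{T}'\}$ exactly and pad with zeros to obtain a feasible distribution over $\mathcal{T}$.

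A Gr\"otschel--Lov\'asz--Schrijver style duality argument on the restricted LP---making one final oracle call at the restricted-dual optimum and re-solving with the returned tree added to $\mathcal{T}'$ if that tree is new---bounds the restricted primal value $\theta'$ by $\pi_{BaB}$ times the optimum of \eqref{primalLP} after the substitution $A_i(T_i^*)\mapsto A_i(\tilde T_i)$. Since $A_i(\tilde T_i)\le \pi_{RoB}A_i(T_i^*)$ from the SSRoB guarantee, the pair $(x',\pi_{RoB}\theta')$ is feasible for the unmodified \eqref{primalLP}; this is where the $\pi_{RoB}$ factor enters. The remaining factor of $2$ comes from the atomic-function reduction described just before \eqref{primalLP}, so the output's objective is at most $2\pi_{RoB}\pi_{BaB}R_1$. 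Randomness in SSBaB is handled by standard amplification: run the oracle $\Theta(\log n)$ times per call and keep the cheapest returned tree, then union-bound over the polynomially many ellipsoid iterations to obtain the claimed high-probability polynomial running time.

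The main technical hurdle is the ellipsoid-with-approximate-separation step: verifying that the polynomial collection of cuts $\mathcal{T}'$ suffices for the restricted primal LP to approximate the full LP optimum to within exactly the oracle's $\pi_{BaB}$ factor, even though the oracle can over-optimistically accept infeasible points. Once this standard but delicate machinery is in place, the rest of the argument is bookkeeping on top of the clean observation that nonnegative combinations of atomic functions are themselves aggregation functions, so the separation problem is just another SSBaB instance.
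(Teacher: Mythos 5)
Your overall strategy—reduce dual separation to an SSBaB instance with cost $f_\valpha(x)=\sum_i \alpha_i A_i(x)$, run ellipsoid, and recover a primal distribution from the polynomially many cuts—is essentially the paper's. The paper does not prove this theorem directly; instead it proves the more general Theorem~\ref{alg_thm} (with an abstract subroutine $\mcA$ achieving $\E[\sum_i\alpha_i A_i(T_\mcA)]\le c\sum_i\alpha_i A_i(T_i^*)$), and Theorem~\ref{first_alg_thm} follows by taking $\mcA$ to be the SSBaB algorithm: $\E[f_\valpha(T_\mcA)]\le\pi_{BaB}\min_T f_\valpha(T)\le\pi_{BaB}R_1\sum_i\alpha_i A_i(T_i^*)$, so $c=\pi_{BaB}R_1$. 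Your direct GLS-style argument hits the same factors of $\pi_{BaB}$, $\pi_{RoB}$, and $R_1$, though the paper avoids your ``one final oracle call'' step by simply fixing $\beta=2c$ and showing the oracle is \emph{guaranteed} to find a violated constraint at that value, so the ellipsoid must terminate infeasible.

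However, your explanation of the factor of $2$ is wrong. You attribute it to ``the atomic-function reduction described just before \eqref{primalLP},'' but that reduction (piecewise-linearizing $f$ at powers of two) is already baked into the \emph{statement} of the primal LP \eqref{primalLP}, whose constraints involve only the atomic functions $A_i$. A ``$\rho$-approximation to the primal LP'' needs no further atomic-function loss. The actual source of the $2$ is precisely the randomness you later wave away as ``standard amplification'': the SSBaB subroutine only guarantees $\E[f_\valpha(T_\mcA)]\le\pi_{BaB}\min_T f_\valpha(T)$, and Markov's inequality converts this into $\Pr[f_\valpha(T_\mcA)\ge 2\pi_{BaB}\min_T f_\valpha(T)]\le \tfrac12$, whence $\Theta(\log n)$ repetitions give a cut with factor $2\pi_{BaB}$ whp. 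Your ``keep the cheapest tree'' scheme does \emph{not} recover a $\pi_{BaB}$-factor whp guarantee; it recovers a $2\pi_{BaB}$ (or more generally $(1+\epsilon)\pi_{BaB}$ with $\Theta(\epsilon^{-1}\log n)$ repetitions) guarantee. This is exactly what the paper's Remark after Theorem~\ref{alg_thm} states: deterministic $\mcA$ gives ratio $c\pi_{RoB}$, randomized $\mcA$ gives $2c\pi_{RoB}$, improvable to $(1+\epsilon)c\pi_{RoB}$. So your amplification step and your factor-of-$2$ accounting need to be merged; as written you double-count the atomic loss and silently drop the Markov loss, and only by coincidence do they produce the same constant.
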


The proof uses a SSBaB approximation algorithm to construct an approximate separation oracle for \eqref{dualLP}.  However, we will not prove this theorem because it is a special case of the following more general result, assuming that $R_1$ is a constant which will follow from Theorem \ref{gmm_thm}.

\begin{thm}  \label{alg_thm}
If there exists a polynomial-time algorithm $\mcA$ and a given constant $c$ such that $\forall\,\alpha_0, \ldots, \alpha_{K-1} \ge 0$, $\mcA$ finds $T_{\mcA}$ such that $\E_{\mcA}\left[\sum_i \alpha_iA_i(T_{\mcA})\right] \le c\sum_i \alpha_i A_i(T_i^*)$ then we can construct an algorithm that runs in polynomial-time with high probability, makes $O(\textnormal{poly}(\log D))$ calls to $\mcA$ with high probability, and achieves an expected oblivious approximation ratio of $2c\pi_{RoB}$ using a distribution over $1+\log D$ trees.
\end{thm}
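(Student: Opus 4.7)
The plan is to use $\mcA$ as the building block of an approximate separation oracle for the dual LP \eqref{dualLP}, run the ellipsoid method on that dual, and then solve the primal restricted to the trees ellipsoid uncovers. Given a candidate point $(\alpha,\beta)$, the oracle checks the single constraint $\sum_i \alpha_i A_i(\tilde T_i)\le 1$ directly; for the family of tree constraints, it runs $\mcA$ at the given $\alpha$, boosted by $O(\log(1/\delta)/\epsilon)$ independent repetitions combined with Markov's inequality so that with probability $\ge 1-\delta$ the returned tree $T_{\mcA}$ satisfies $\sum_i \alpha_i A_i(T_{\mcA})\le (1+\epsilon)c\sum_i \alpha_i A_i(T_i^*)$. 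If this quantity is strictly less than $\beta$, return $T_{\mcA}$ as a violated constraint; otherwise declare $(\alpha,\beta)$ ``oracle-feasible''. The essential consequence is that an oracle-feasible verdict forces $\beta\le(1+\epsilon)c\sum_i \alpha_i A_i(T_i^*)$, and since $A_i(T_i^*)\le A_i(T)$ for every tree $T$, the scaled point $(\alpha, \beta/((1+\epsilon)c))$ is truly dual-feasible.

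The critical quantitative step, which avoids the $c^2$ loss one might naively expect from an approximate oracle, is an a priori bound $\beta^*\le c$ on the true dual optimum. This follows directly: for any dual-feasible $(\alpha,\beta)$, combining $\sum_i \alpha_i A_i(\tilde T_i)\le 1$ with $A_i(T_i^*)\le A_i(\tilde T_i)$ gives $\sum_i \alpha_i A_i(T_i^*)\le 1$; the assumption on $\mcA$ then yields $\E_{\mcA}[\sum_i \alpha_i A_i(T_{\mcA})]\le c$, so some realization satisfies $\sum_i \alpha_i A_i(T_{\mcA})\le c$, and the tree constraint for that specific $T_{\mcA}$ forces $\beta\le c$.

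With this in hand I run the ellipsoid method on the dual with target value $(1+\epsilon)c$. Since $\beta^*\le c<(1+\epsilon)c$, a union bound over the polynomially many oracle calls (with per-call failure probability $\delta=1/\mathrm{poly}$) ensures with high probability that no queried point is ever incorrectly declared oracle-feasible at $\beta\ge(1+\epsilon)c$. The ellipsoid therefore terminates in polynomial time with an ``infeasible'' verdict and a polynomial-size set $\mathcal{S}$ of trees whose constraints certify the infeasibility. Applying LP duality to the pair of LPs restricted to the constraints from $\mathcal{S}$, the restricted primal admits a solution with $\theta\le (1+\epsilon)c$. Solving the restricted primal exactly in polynomial time and extracting a basic feasible solution $(x^*,\theta^*)$: since the primal has $\log D+2$ nontrivial inequality constraints and $\theta$ occupies one of those slots, $x^*$ is supported on at most $1+\log D$ trees. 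For each atomic index $i$ we then have $\sum_T x^*_T A_i(T)\le (1+\epsilon)c\,A_i(\tilde T_i)\le (1+\epsilon)c\pi_{RoB}A_i(T_i^*)$, and combining with the factor-$2$ interpolation from atomic to general aggregation functions introduced just before the LP, the output distribution achieves oblivious ratio at most $2(1+\epsilon)c\pi_{RoB}$, which tends to $2c\pi_{RoB}$ as $\epsilon\to 0$.

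The main obstacle will be handling the randomization of $\mcA$: the guarantee is only in expectation, so a single realization could occasionally cause the oracle to misclassify a point and derail the ellipsoid. The fix is to replace each oracle call by a boosted estimator with explicit failure probability, to union-bound over all ellipsoid iterations so the oracle is globally reliable, and to propagate the resulting $(1+\epsilon)$ slack through the LP analysis while keeping the final ratio arbitrarily close to $2c\pi_{RoB}$.
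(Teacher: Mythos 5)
Your proposal follows essentially the same architecture as the paper's proof: build an approximate separation oracle for the dual \eqref{dualLP} from $\mcA$, run ellipsoid to extract a polynomial-size infeasibility certificate, form the restricted primal on the resulting set of trees, solve exactly, and read the $1+\log D$ support bound off a vertex solution. The two places you diverge are both minor refinements/bookkeeping rather than a different argument. First, you boost each oracle call to $(1+\epsilon)$-accuracy at the cost of a $1/\epsilon$ runtime blowup, whereas the paper simply uses a single Markov-inequality threshold at $2c$ and stops after $O(\log n)$ repetitions; the paper's remark after Theorem \ref{alg_thm} already notes your $(1+\epsilon)$ variant as an option. Second, your final constant $2(1+\epsilon)c\pi_{RoB}$ charges the factor of $2$ to the atomic-interpolation step, but the theorem's $2c\pi_{RoB}$ refers to the ratio against the atomic functions $A_i$ and the $2$ there is the Markov factor (this is consistent with the remark that deterministic $\mcA$ gives $c\pi_{RoB}$); so your bookkeeping double-counts relative to the paper's stated claim, though the underlying reasoning is sound. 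Your a~priori $\beta^*\le c$ argument (some realization of $\mcA$ must achieve the expectation bound) is a clean alternative to the paper's direct claim that the oracle always succeeds at $\beta\ge 2c$, and your observation that an oracle-feasible verdict yields a truly dual-feasible scaled point is a nice consistency check, though neither is strictly needed once you know the oracle reliably separates at the target $\beta$.
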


Proving that such an algorithm $\mcA$ exists for a constant $c$ is the primary result of this paper and is discussed in sections \ref{oracle_section} and \ref{regularization_section}.

\begin{rem} If $\mcA$ is deterministic then the algorithm always runs in polynomial time and the expected ratio is $c\pi_{RoB}$, and if it is randomized then the algorithm runs in polynomial time with high probability and the expected ratio is $2c\pi_{RoB}$.  For randomized $\mcA$ the ratio can also be reduced to $(1+\epsilon)c\pi_{RoB}$ with a $\frac{1}{\epsilon}$-factor increase in the runtime.
\end{rem}

\begin{proof}[Proof of Theorem \ref{alg_thm}]

Let $A_i(\tilde{T_i})$ be a $\pi_{RoB}$-approximation to $A_i(T_i^*)$ as above.  We construct an approximate separation oracle $\mathcal{S}(\valpha, \beta)$ for the dual \eqref{dualLP} as follows:

\begin{enumerate}

\item Check if $\sum_i \alpha_iA_i(\tilde{T_i}) > 1$.  If so, we have a violated constraint and are done.

\item \label{find_tree_step} Run $\mcA(\valpha)$ until it returns a tree $T$ such that $\sum_i \alpha_iA_i(T) < 2c\sum_i \alpha_iA_i(\tilde{T_i})$.

\item If $\sum_i \alpha_iA_i(T) < \beta$, return $T$.  Otherwise, return feasible.

\end{enumerate}

For a fixed $\beta$, let $\mathcal{P}_\beta$ be the polytope defined by $\sum_i \alpha_iA_i(\tilde{T_i}) \le 1$, and $\beta - \sum_i \alpha_iA_i(T) \le 0$ for all $T \in \mathcal{T}$.  We run the following procedure to find the desired distribution of trees:

\begin{enumerate}

\item Run the ellipsoid method to check the feasibility of $\mathcal{P}_{2c}$, starting with the initial bounding box $0 \le \alpha_i \le 1$ $\forall i$ and using $\mathcal{S}$ as the separation oracle.  It will terminate as infeasible.

\item Let $\mathcal{C}$ be the set of constraints returned by $\mathcal{S}$ proving $\mathcal{P}_{2c}$ is infeasible.  It consists of $\sum_{i = 0}^{\log D} \alpha_iA_i(\tilde{T_i}) \le 1$, and $2c - \sum_{i = 0}^{\log D}\alpha_iA_i(T) \le 0$ for $T$ in some subset of trees $\mathcal{T}'$.

\item In the dual LP (2), restrict the constraints to $\mathcal{C}$, and take the dual to get
\begin{equation}
\label{small_primal}
\begin{array}{rlll}
\min & \theta & &\\
\textrm{s.t.} & & \sum_{T \in \mathcal{T}'} x_T &\ge 1 \\ 
&  \forall 0\le i \le \log D, & \theta A_i(\tilde{T_i}) - \sum_{T \in \mathcal{T}'}x_T A_i(T) &\ge 0 \\
& & x,\theta & \ge 0 \\
\end{array}
\end{equation}

\item Find a vertex optimal solution to \eqref{small_primal}, and return the distribution $\{x^*_T\}$.
\end{enumerate}

First, we claim that $\mathcal{S}(\valpha, \beta)$ will find a violated constraint whenever $\beta \ge 2c$ and will do so in polynomial time with high probability.  If $\sum_i \alpha_iA_i(\tilde{T_i}) \le 1$  is violated, then we are done.  If not, we know $\mcA(\valpha)$ finds $T_\mcA$ such that 
\[
\E_{\mcA}\left[\sum_i \alpha_iA_i(T_{\mcA})\right] \le c\sum_i \alpha_i A_i(T_i^*) \le c\sum_i \alpha_iA_i(\tilde{T_i}) \le c
\]
By Markov's inequality $\Pr_{\mcA} \left[\sum_i \alpha_iA_i(T_{\mcA}) \ge 2c\sum_i \alpha_iA_i(\tilde{T_i})\right] \le \frac{1}{2}$, so with high probability $O(\log n)$ invocations of $\mcA$---each running in polynomial time---suffice in step \ref{find_tree_step} of $\mathcal{S}$ to find a $T$ satisfying $\sum_i \alpha_iA_i(T) < 2c\sum_i \alpha_iA_i(\tilde{T_i})$.  Now if $\beta \ge 2c$, the constraint $\beta - \sum_i \alpha_iA_i(T) \le 0$ is violated.

With the necessary separation oracle, the ellipsoid algorithm can solve feasibility of $\mathcal{P}_\beta$ in $O(\textnormal{poly}(\log D))$ iterations, so using $\mathcal{S}$ it will conclude $\mathcal{P}_{2c}$ is infeasible\footnote{In practice $\mcA$ may find violated constraints for $\beta < 2c$, and we can do binary search to find the smallest infeasible $\beta$.  However, we cannot improve the provable guarantee beyond $\beta = c$, and this comes at a cost to the runtime.}.  
The set of constraints $\mathcal{C}$ returned by $\mathcal{S}$ during the execution constitutes a proof of infeasibility,
and $\mathcal{C}$ consists of $\sum_{i = 0}^{\log D} \alpha_iA_i(\tilde{T_i}) \le 1$, and $\beta - \sum_{i = 0}^{\log D}\alpha_iA_i(T) \le 0$ for each $T$ in some set of trees $\mathcal{T}'$.  

Consider writing \eqref{dualLP} with only the constraints in $\mathcal{C}$.  Taking the dual yields \eqref{small_primal}, which only has variables $x_T$ for $T \in \mathcal{T}'$.  The ellipsoid algorithm concluded $\mathcal{P}_{2c}$ is infeasible after $O(\textnormal{poly}(\log D))$ iterations, so $|\mathcal{T}'|$ is only polynomially-large in the input size, implying we can solve \eqref{small_primal} exactly in polynomial time.

Find a vertex-optimal solution $\theta^*, x_T^*$ to \eqref{small_primal}.  The constraints in $\mathcal{C}$ are enough to restrict the optimal dual objective to be at most $2c$, so by duality $\theta^* \le 2c$.  Therefore, for all $i$
\[
\sum_{T \in \mathcal{T}'} x_T^*A_i(T) \le \theta^*A_i(\tilde{T_i}) \le 2cA_i(\tilde{T_i}) \le 2c\pi_{RoB}A_i(T_i^*)
\]
Divide by $A_i(T_i^*)$ to get the oblivious ratio:
\[
\max_i \frac{\sum_T x_T^*A_i(T)}{A_i(T_i^*)} \le 2c\pi_{RoB}
\] 

Moreover, we claim $\{x_T^*\}$ is a distribution over only $1+\log D$ trees.  The LP \eqref{small_primal} has $|\mathcal{T}'| + 1$ variables and $2 + \log D$ constraints, and the vertex-optimal solution $\theta^*, x_T^*$ must have $|\mathcal{T}'| + 1$ tight constraints, implying at least $|\mathcal{T}'| - \log D-1$ non-negativity constraints must be tight.  We know $\theta^*$ is positive, so only at most $1+\log D$ of the variables $x_T$ can be non-zero.
\end{proof}

\section{The Separation Oracle Subroutine $\mcA$}
\label{oracle_section}

By Theorem \ref{first_alg_thm} we can find an $O(1)$-approximation to $R_1$, whatever it may be, but it remains to prove that this optimal ratio is a constant.  In this section we construct the procedure $\mcA$ required by Theorem \ref{alg_thm} using the GMM algorithm for SSBaB.

Our contribution is adapting a special case of the analysis of the GMM algorithm, namely those cases that arise when $f(x) = \sum_i \alpha_iA_i(x)$, to solve a different problem--that of bounding the cost of the output by $\sum_i \alpha_i A_i(T_i^*)$ rather than $f(T_f^*)$.  The GMM algorithm and proof works in stages and bounds the cost of the pipes laid in each stage by a different chunk of the optimal tree $T_f^*$.  On the other hand, in our proof we bound the cost of each stage by the cost of a \emph{different} tree evaluated under a \emph{different} cost function.

\subsection{Background: The GMM Algorithm}

For completeness, we summarize the GMM algorithm and the key lemmas and definitions.  See the original paper \cite{guha2001cfa} for a thorough treatment.  We are given a graph, demands $\mathcal{D}$, and pipes $\{(\sigma_k, \delta_k)\}_{k \in [K]}$ as described in Section \ref{algorithm_section}.  
We assume the costs of successive pipes differ ``significantly'': for some constant $\gamma$ such that $0 < \gamma < \frac{1}{2}$, we have that $\delta_{k+1} < \gamma\delta_k$ and $\sigma_k < \gamma\sigma_{k+1}$.  For the SSBaB problem, it is easy to satisfy these constraints for arbitrary pipes with only an $O(1)$-factor loss.  For our problem, it is harder but still possible, and this is discussed in Section \ref{regularization_section}.  

We define $g_k$ as the indifference point between pipe $k$ and $k+1$, which is the solution to the equation $\sigma_k +\delta_kg_k = \sigma_{k+1} + \delta_{k+1}g_k$, and we define $b_k$ as the solution to $\sigma_{k+1} + \delta_{k+1}b_k = 2\gamma(\sigma_k + \delta_kb_k)$, which we interpret as the point at which pipe $k+1$ becomes ``significantly'' cheaper than pipe $k$.  It is easy to see that $u_k \le b_k \le u_{k+1}$ for all $k$.

The algorithm uses $O(1)$-approximations for Steiner tree and load-balanced facility location (LBFL), a generalization of the standard facility location problem.  In the LBFL problem we have a graph and demands as in SSBaB, a facility cost $F_v$ for each node $v$, and a lower bound $L_v$ on the demand that a facility at $v$ must service.  The objective is to choose facilities and routing paths so as to minimize the sum of the cost of the open facilities and the distances traveled by the demands to a servicing facility.  To approximate the LBFL we must relax the lower bound.  Using \cite{guha2000hpa} we can approximate the optimal LBFL cost to within $2\pi_F$ while reducing the lower bound by a factor of at most 3.  Here $\pi_F$ denotes the best approximation to the normal facility location problem, currently $\pi_F = 1.52$ by Mahdian et al.\ \cite{mahdian2002iaa}.  We use $\pi_{S}$ to denote the best approximation ratio for Steiner tree, currently $1.55$ due to Robins and Zelikovsky \cite{robins2000ist}.

Now we can describe the GMM algorithm itself.  At stage $k$, we lay pipe type $k$, and we break each stage into a Steiner tree step and a ``shortest-path'' tree step based on whether the cost of pipe $k$ is dominated by the term $\sigma_k$ or the term $\delta_kx$.  The effective demands will also change each stage.  Let $\mathcal{D}^{(k)}$ be the demand nodes at the start of stage $k$, and $d_v^{(k)}$ the stage $k$ demand at $v \in \mathcal{D}^{(k)}$.  Initially $\mathcal{D}^{(0)} = \mathcal{D}$.

\begin{description}

\item{1. \emph{Steiner Tree: }}
Find a $\pi_S$-approximate Steiner tree on $\mathcal{D}^{(k)} \cup \{r\}$ with edge cost per unit length $\sigma_k$.  Route all demands toward $r$.  Cut the farthest-upstream edge with more than $u_k$ flow, recalculate the flow, and repeat to get a forest with at least $u_k$ flow at each root other than $r$ and at most $u_k$ flow on each edge.

\item{2. \emph{Consolidation:}}
Let $t$ be a subtree not containing $r$ and $S_t$ the demand nodes in $\mathcal{D}^{(k)}$ it contains.  Choose $v \in S_t$ with probability $\frac{d_v^{(k)}}{\sum_{u \in S_t} d_u^{(k)}}$ and route all demand in $t$ back to $v$ using pipe $k$.

\item{3. \emph{Shortest Path Tree:}}
Approximately solve a LBFL problem with facility lower bound $b_k$ and edge cost per unit length $\delta_k$ on the \emph{original} demands $\mathcal{D}$ (not $\mathcal{D}^{(k)}$ and $d_v^{(k)}$).  This creates a forest of shortest-path trees with at least $b_k$ flow at each root.  If $b_k$ demand does not exist, route everything to $r$.

\item{4. \emph{Consolidation:}}
Let $t$ be subtree in the above forest servicing the demands $S_t$ in $\mathcal{D}$.  Choose $v \in S_t$ with probability $\frac{d_v}{\sum_{u \in S_t}  d_u}$, and route the true, current demand $d_v^{(k)}$ in $S_t$ back to $v$.  Let $\mathcal{D}^{(k+1)}$ be the set of nodes chosen for consolidation and $d_v^{(k+1)}$ the demand at these nodes after consolidation.
\end{description}

Next, we mention the crucial lemmas in the GMM analysis used in our proof.  See \cite{guha2001cfa} for the proofs.

\begin{lem}[GMM Lemma 4.1] \label{consolidation_lem} 
Let $\hat{d_v}$ be the current demand at some $v \in \mathcal{D}$ immediately after any consolidation step.  Then $\E[\hat{d_v}] = d_v$, i.e.\ the original demand.
\end{lem}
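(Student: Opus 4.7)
The plan is to prove, by induction on consolidation steps, the invariant $\E[\hat{d}_v]=d_v$ for every $v\in\mathcal{D}$, with the convention $\hat{d}_v=0$ when $v$ has been dropped from the current demand set. The base case (no consolidations yet) is trivial. For the inductive step, fix a consolidation step in some stage $k$ and assume the demand $d_v^{(k)}$ at $v$ immediately beforehand satisfies $\E[d_v^{(k)}]=d_v$ for every $v$.

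First consider the Steiner-tree consolidation (step 2). For a fixed choice of subtree $S_t$ containing $v$ and a fixed realization of the demands $(d_u^{(k)})_{u\in S_t}$, node $v$ is selected with probability $d_v^{(k)}/\sum_{u\in S_t} d_u^{(k)}$ and, when selected, receives $\sum_{u\in S_t} d_u^{(k)}$ units. Thus the conditional expectation of $\hat{d}_v$ equals exactly $d_v^{(k)}$, and taking total expectation together with the inductive hypothesis gives $\E[\hat{d}_v]=\E[d_v^{(k)}]=d_v$.

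The shortest-path consolidation (step 4) is the delicate case and is the main obstacle. Here the selection probability $d_v/\sum_u d_u$ is written in terms of the deterministic \emph{original} demands, whereas the routed mass $\sum_{u\in S_t}d_u^{(k)}$ is random. The essential observation is that the LBFL subproblem in step 3 is defined on the original demand set $\mathcal{D}$, so the partition $\{S_t\}$ is a function only of the fixed input and the LBFL-approximation algorithm's own internal randomness; in particular we may take $\{S_t\}$ to be independent of the consolidation randomness that produced $(d_u^{(k)})_u$. Conditioning on $S_t$ and using this independence together with the inductive hypothesis applied term-by-term,
\[
\E[\hat{d}_v\mid S_t] \;=\; \frac{d_v}{\sum_{u\in S_t} d_u}\cdot\E\!\left[\sum_{u\in S_t} d_u^{(k)}\,\Big|\,S_t\right] \;=\; \frac{d_v}{\sum_{u\in S_t} d_u}\cdot\sum_{u\in S_t} d_u \;=\; d_v.
\]
Taking total expectation closes the induction.

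The hard part is isolating the independence in step 4: it is crucial that the LBFL instance \emph{and} the consolidation weights both refer to the original demand vector $d$ rather than the current vector $d^{(k)}$. Had either role been played by $d^{(k)}$, the partition $\{S_t\}$ would be correlated with the random mass $\sum_{u\in S_t}d_u^{(k)}$ and the cancellation in the display above would fail; recognizing this is the only substantive content of the argument, the rest being bookkeeping of conditional expectations.
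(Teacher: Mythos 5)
Your proof is correct. The paper does not reproduce an argument for this lemma---it states it as GMM Lemma~4.1 and points to \cite{guha2001cfa} for the proof---so there is no in-paper proof to compare against. Your induction/martingale argument is the standard one and matches the reasoning in the cited source: in the Steiner-tree consolidation the selection weights and the routed mass are both expressed in the current demands, so conditioning on the partition and the realized demands gives $\E[\hat{d}_v\mid\cdot]=d_v^{(k)}$ immediately; in the shortest-path consolidation the LBFL instance and the selection weights are both deliberately posed on the fixed original demand vector, which makes the partition $\{S_t\}$ independent of the random current demands, and that independence is exactly what lets the conditional expectation collapse to $d_v$. You correctly flag this decoupling as the substantive content of the argument. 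The only minor caveat is notational: in the step-4 display, $d_u^{(k)}$ should be read as the demand immediately after the step-2 consolidation of the same stage (to which the inductive hypothesis has just been extended), not the stage-$k$ initial demand; your prose makes the intended reading clear, so this is cosmetic rather than a gap.
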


Using an algorithm that is a 3-approximation to the LBFL facility lower bounds, we have the following:

\begin{lem}[GMM Lemma 4.5] \label{current_demand_lem}
For every $v \in \mathcal{D}^{(k)}$, we have $\E[d_v^{(k)}] \ge \frac{b_{k-1}}{3}$.
\end{lem}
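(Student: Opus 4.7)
The plan is to trace each $v \in \mathcal{D}^{(k)}$ back to the step-4 consolidation of stage $k-1$, where $v$ was selected from some LBFL subtree $S_t$, and to argue that the demand absorbed by $v$ is in expectation at least the LBFL lower bound (after 3-relaxation). Conditional on $v$ being the chosen node in $S_t$, one has $d_v^{(k)} = \sum_{u \in S_t}\hat{d}_u$, where $\hat{d}_u$ denotes the current demand at $u$ immediately after step 2 of stage $k-1$. The remaining task is to lower-bound this conditional expectation.

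The key observations are as follows. The LBFL in step 3 of stage $k-1$ is solved on the \emph{original} demands $\mathcal{D}$, so the subtree partition $\{S_t\}$ and the original totals $D_t := \sum_{u \in S_t}d_u$ are determined by the problem input and LBFL's internal randomness, not by the randomness of prior consolidations. The relaxed facility lower bound guarantees $D_t \ge b_{k-1}/3$ for every subtree $S_t$. Moreover, the step-4 selection rule chooses $v$ with probability $d_v/D_t$, which uses only these original quantities together with fresh independent randomness. Consequently, the event ``$v$ is chosen from $S_t$'' is independent of the current-demand vector $\{\hat{d}_u\}_{u \in S_t}$. Applying Lemma \ref{consolidation_lem} iteratively across all prior consolidation steps gives $\E[\hat{d}_u] = d_u$ for each $u \in \mathcal{D}$, hence $\E[\sum_{u \in S_t}\hat{d}_u] = D_t \ge b_{k-1}/3$, and by the above independence the same value equals $\E[\sum_{u \in S_t}\hat{d}_u \mid v \text{ chosen from } S_t]$, yielding the claimed bound.

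The main obstacle is the independence argument between step-4's selection and the already-realized $\hat{d}_u$'s. It is intuitive from the algorithm's description but requires care: LBFL's own randomness (if present) must be decoupled from the prior stages' consolidation randomness, and the selection rule in step 4 must indeed depend only on the original demands, as stated. Granted these, the chain of equalities above goes through cleanly.
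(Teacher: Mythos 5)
This lemma is stated in the paper only by citation---the paper lists it among ``the crucial lemmas in the GMM analysis used in our proof'' and remarks ``See \cite{guha2001cfa} for the proofs''---so there is no in-paper proof to compare against; your argument must stand on its own.

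It does. You correctly identify the algorithmic feature that makes the bound work: the LBFL in step 3 is solved on the \emph{original} demands $\mathcal{D}$, and the step-4 selection rule chooses $v$ with probability $d_v/D_t$ using only original demands and fresh randomness, so the partition $\{S_t\}$ and the event ``$v$ chosen'' are independent of the current demands $\{\hat{d}_u\}$, which are determined by earlier consolidation randomness. Conditioned on $v$ being chosen, $d_v^{(k)}=\sum_{u\in S_t}\hat{d}_u$, and by that independence plus Lemma~\ref{consolidation_lem} the conditional expectation is $\sum_{u\in S_t}d_u=D_t\ge b_{k-1}/3$, the last inequality coming from the $3$-relaxed LBFL facility lower bound. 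This is the standard GMM argument. One small wording nit: Lemma~\ref{consolidation_lem} already asserts $\E[\hat{d}_u]=d_u$ immediately after \emph{any} consolidation step, so the ``iterative application across all prior consolidation steps'' you invoke is the content of that lemma rather than a further step you need to perform.
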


Define $P_k^{\delta}$ to be the \emph{incremental} cost (due to $\delta$) of the pipes laid in the \emph{facility location} step in stage $k$ and $P_k^{\sigma}$ to be the \emph{fixed} cost (due to $\sigma$) of the pipes laid in the \emph{Steiner tree} step in stage $k$.  All of the other costs incurred by the GMM algorithm can be bounded by $P_k^{\delta}$ and $P_k^{\sigma}$, so our analysis need only consider these quantities:

\begin{lem}[GMM Lemmas 4.2, 4.4, and 4.8] \label{gmm_cost_lem}  Let $P_k^{\delta}$ and $P_k^{\sigma}$ as defined above.
Then $\E[f(T_{GMM})] \le 4\sum_k \E[P_k^{\delta} + P_k^{\sigma}]$, where $T_{GMM}$ is the final tree.
\end{lem}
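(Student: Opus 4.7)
The plan is to upper bound $f(T_{GMM})$ by the total cost the algorithm actually pays for the pipes it lays (since for each edge $f$ takes the minimum-cost pipe, and the algorithm's chosen pipe is one candidate), and then sum the contributions stage by stage. In stage $k$ the algorithm lays pipes of type $k$ (per-unit-length cost $\sigma_k + \delta_k x$) in four substeps: the Steiner tree step, its consolidation, the LBFL step, and its consolidation. By definition, the fixed cost of the Steiner pipes equals $P_k^\sigma$ and the variable cost of the LBFL pipes equals $P_k^\delta$, so it remains to charge the Steiner variable cost, the LBFL fixed cost, and both consolidation costs back to these two quantities in expectation.

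The two core charges are direct. After the flow cut at $u_k = \sigma_k/\delta_k$, every Steiner-step edge carries at most $u_k$ flow, so its variable cost per unit length $\delta_k x_e \le \sigma_k$; summing over edges gives Steiner variable cost $\le P_k^\sigma$. For the LBFL fixed cost, Lemma \ref{current_demand_lem} says each opened facility services at least $b_{k-1}/3$ expected flow, while $b_k$ is chosen so that $\delta_k b_k$ matches $\sigma_k$ up to the gap factor $\gamma < \tfrac{1}{2}$; hence the $\sigma_k$ setup per facility is charged against the $\delta_k \cdot \Omega(b_{k-1})$ transport cost of the flow it services, bounding the LBFL fixed cost by $O(P_k^\delta)$. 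For the two consolidation steps, each demand is routed at most the radius of its Steiner subtree (first consolidation) or its LBFL service region (second consolidation), so the consolidation pipe costs fold back into the Steiner and LBFL costs already accounted for. The consolidation point is random, but Lemma \ref{consolidation_lem} gives $\E[\hat{d_v}] = d_v$, which makes the expected routed flow, and hence the expected consolidation variable cost, match the deterministic analysis.

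The main obstacle is keeping the aggregated constant exactly at $4$, which requires tight coordination of (i) the $\gamma$-gap inequalities $\delta_{k+1} < \gamma \delta_k$ and $\sigma_k < \gamma \sigma_{k+1}$ to prevent stage-$k$ costs from leaking into neighboring stages, (ii) the $3$-approximate LBFL lower-bound guarantee from \cite{guha2000hpa} feeding into Lemma \ref{current_demand_lem}, and (iii) the repeated use of Lemma \ref{consolidation_lem} across stages to conclude that $\E[d_v^{(k)}] = d_v$ and hence that the randomized variable costs integrate cleanly. Combining the four per-stage bounds yields $\E[\text{stage-$k$ pipe cost}] \le 4 \,\E[P_k^\sigma + P_k^\delta]$, and summing over $k$ gives the claim. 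The delicate part is not any single step but the careful accounting that keeps each of the four charged components within a factor of $1$ of $P_k^\sigma$ or $P_k^\delta$, rather than within a factor that compounds across the $\log D$ stages.
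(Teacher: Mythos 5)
The paper does not actually prove this lemma; it explicitly defers to Guha, Meyerson, and Munagala (``See \cite{guha2001cfa} for the proofs''), so there is no in-paper argument to compare your attempt against. That said, your reconstruction has the right high-level shape: decompose by stage, observe that $P_k^\sigma$ and $P_k^\delta$ account for only two of the six per-stage cost components (Steiner $\sigma$, Steiner $\delta$, consolidation 1, LBFL $\sigma$, LBFL $\delta$, consolidation 2), and charge the remaining four. The Steiner-variable charge is sound: after the flow cut every Steiner-step edge carries at most $u_k = \sigma_k/\delta_k$ flow, so $\delta_k x_e \le \sigma_k$ edge-by-edge and the Steiner $\delta$-cost is at most $P_k^\sigma$.

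The LBFL fixed-cost charge is where your sketch does not close. You invoke Lemma~\ref{current_demand_lem} ($\E[d_v^{(k)}] \ge b_{k-1}/3$) to bound ``the $\sigma_k$ setup per facility,'' but that lemma describes the consolidation points produced by the stage-$(k{-}1)$ LBFL (the nodes of $\mathcal{D}^{(k)}$), not the facilities opened in the stage-$k$ LBFL, so the index is off by one. More fundamentally, the quantity to control is not a facility opening cost but the pipe fixed cost $\sigma_k l_e$ on \emph{every} edge of the LBFL forest, including leaf edges that carry far less than $u_k$ flow, so the mirror-image charge $\delta_k x_e \ge \sigma_k$ is simply false edge-by-edge. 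You also assert the LBFL $\sigma$-cost is ``$O(P_k^\delta)$'' yet simultaneously claim that each of the four charges is within a factor of $1$; both cannot hold if the constant is to come out to exactly $4$. In short, none of the consolidation bounds nor the LBFL fixed bound are derived, the relationship between $b_k$ and $\sigma_k$ you describe is imprecise (the paper has $u_k \le b_k \le u_{k+1}$, not $\delta_k b_k \approx \sigma_k$), and the constant $4$ is asserted rather than obtained. The plan is plausible, but the actual accounting lives in GMM Lemmas 4.2, 4.4, and 4.8, which is precisely why this paper cites them rather than reproving the result.
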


\subsection{Adapting the GMM Algorithm}

From Theorem \ref{alg_thm} we are given $\valpha$ such that $\alpha_i \ge 0$, and $\sum_i \alpha_i A_i(\tilde{T_i}) \le 1$.  We want to find a tree $T$ using the GMM algorithm such that $\sum_i \alpha_iA_i(T) \le c\sum_i \alpha_iA_i(T_i^*)$. 
Define $L =  \sum_i \alpha_iA_i(T_i^*)$, the \emph{multi-level cost}, and $f(x) = \sum_i \alpha_iA_i(x)$, the concave cost function.  Using this notation our objective becomes to find $T$ such that $f(T) \le cL$.  Define $K$ as the number of non-zero $\alpha_i$, and for $0 \le k \le K-1$ define $p(k) = j$  where $j$ is the index of the $k$-th non-zero $\alpha_i$. 

First, we claim that given $\valpha$ we can define the pipes $\{(\sigma_k, \delta_k)\}$ used by the GMM algorithm, and given SSBaB pipes satisfying some minor conditions we can recover $\valpha$.  The following lemmas characterize the equivalence between the 2 types of parameters:

\begin{lem} \label{alphatodelta_lem}
Given $\valpha$ satisfying $\alpha_i \ge 0$ with $K$ non-zero $\alpha_i$, the SSBaB pipes $\{(\sigma_k, \delta_k)\}_{0 \le k \le K}$ defined by $\delta_k = \sum_{j \ge k} \alpha_{p(j)}$ and $\sigma_k = \sum_{j < k} \alpha_{p(j)}2^{p(j)}$ define the function $f(x)$.  That is, $f(x) = \sum_i \alpha_iA_i(x) = \min_k \{ \sigma_k + \delta_k x\}$.
\end{lem}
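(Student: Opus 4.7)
The plan is to prove the identity in two steps: (i) show that $f(x)=\sum_i\alpha_iA_i(x)$ is piecewise linear with breakpoints exactly at the values $2^{p(k)}$, and on each piece equals one of the candidate lines $\sigma_k+\delta_kx$; (ii) use concavity of $f$ to promote that piecewise agreement to $f(x) = \min_k\{\sigma_k+\delta_kx\}$ globally.

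First I would reduce notation to the non-zero coordinates only. Let $a_k=\alpha_{p(k)}$ and $s_k=2^{p(k)}$, so that $p(0)<p(1)<\cdots<p(K-1)$ gives $0<s_0<s_1<\cdots<s_{K-1}$, and write $f(x)=\sum_{k=0}^{K-1}a_k\min\{x,s_k\}$. Zero coefficients contribute nothing to $f$ and the definitions of $\sigma_k,\delta_k$ only sum over non-zero $\alpha_i$, so dropping them is legitimate.

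Next I would evaluate $f$ on each interval $[s_{k-1},s_k]$ (with the convention $s_{-1}=0$ and $s_K=\infty$). On that interval, $\min\{x,s_j\}=s_j$ for every $j<k$ and $\min\{x,s_j\}=x$ for every $j\ge k$. Summing gives
\[
f(x)\;=\;\sum_{j<k}a_js_j\;+\;x\sum_{j\ge k}a_j\;=\;\sigma_k+\delta_kx\quad\text{for } x\in[s_{k-1},s_k],
\]
exactly matching the definitions of $\sigma_k$ and $\delta_k$. In particular $f$ is continuous and piecewise linear with the claimed pieces, and the $K$-th pipe $(\sigma_K,0)$ handles the flat tail $x\ge s_{K-1}$ where $f\equiv\sum_j a_js_j$.

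Finally I would argue $f(x)=\min_k\{\sigma_k+\delta_kx\}$. Each $A_i$ is concave and non-decreasing with $A_i(0)=0$, so $f$ is a non-negative combination of concave functions and hence itself concave. The line $\ell_k(x):=\sigma_k+\delta_kx$ agrees with the concave function $f$ on the whole interval $[s_{k-1},s_k]$, so it is a supporting line of the hypograph of $f$; by concavity $\ell_k(x)\ge f(x)$ for every $x\ge 0$. Taking the minimum over $k$ gives $\min_k\ell_k(x)\ge f(x)$, while the piecewise computation above exhibits, for each $x$, an index $k$ achieving equality. This yields the desired identity. The only subtlety is the boundary handling at the breakpoints and at $x\ge s_{K-1}$, which I would spell out by noting that adjacent pieces agree at $x=s_{k-1}$ since $\sigma_k+\delta_ks_{k-1}=\sigma_{k-1}+\delta_{k-1}s_{k-1}$ follows directly from the definitions; there is no real obstacle beyond bookkeeping.
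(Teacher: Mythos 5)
Your proposal is correct and follows essentially the same route as the paper's proof: evaluate $f$ explicitly on each interval $[2^{p(k-1)},2^{p(k)}]$ to identify $\sigma_k$ and $\delta_k$, then invoke concavity of $f$ to conclude that each line lies on or above $f$ everywhere, so the minimum over pipes recovers $f$. Your "supporting line of the hypograph" phrasing is just a slightly more formal packaging of the paper's one-line concavity argument, and you also verify continuity at the breakpoints, which the paper leaves implicit.
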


\begin{lem} \label{deltatoalpha_lem}
Suppose we are given $K+1$ SSBaB pipes $\{(\sigma_k, \delta_k)\}_{0 \le k \le K}$ such that $\sigma_0 = 0$ and $g_k$ is a power of 2 for all $k$. For $0 \le k \le K-1$, let $p(k) = \log g_k$, $\alpha_{p(k)} = \delta_k - \delta_{k+1}$, and $\alpha_j =0$ whenever $j \neq p(k)$ for all $k$.  Then $\sum_i \alpha_iA_i(x) = \min_k \{ \sigma_k + \delta_k x\}$.
\end{lem}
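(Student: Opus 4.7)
The plan is to prove Lemma \ref{deltatoalpha_lem} by the same strategy as (and essentially as an inverse to) Lemma \ref{alphatodelta_lem}: verify that both sides are continuous concave piecewise-linear functions of $x \ge 0$ vanishing at $0$, with the same breakpoints and the same slope on each piece. Since a continuous piecewise-linear function is pinned down by its value at one point together with its breakpoints and slopes, this is enough.

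First I would read off the structure of the right-hand side $g(x) = \min_k\{\sigma_k + \delta_k x\}$: its breakpoints are the indifference points $g_0 < \cdots < g_{K-1}$, on the interval between consecutive breakpoints (with the convention $g_{-1} = 0$) pipe $k$ is the cheapest so the slope is $\delta_k$, and $g(0) = \sigma_0 = 0$. I would then do the same for $h(x) = \sum_i \alpha_i A_i(x)$: the only non-zero coefficients sit at indices $j = p(k) = \log g_k$, so the breakpoints of $h$ are exactly the dyadic values $2^{p(k)} = g_k$, matching those of $g$, and $h(0) = 0$.

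The meat is a telescoping slope calculation. On the interval $(g_{k-1}, g_k)$, the term $A_{p(j)}$ still contributes slope $1$ for $j \ge k$ (its breakpoint $g_j$ has not yet been reached) and is saturated with slope $0$ for $j < k$. So the slope of $h$ on this interval is $\sum_{j \ge k}\alpha_{p(j)} = \sum_{j=k}^{K-1}(\delta_j - \delta_{j+1}) = \delta_k - \delta_K$, which telescopes cleanly to $\delta_k$ under the natural convention $\delta_K = 0$ for the terminal pipe (this convention is built into Lemma \ref{alphatodelta_lem} too, where the analogous $\delta_K$ is defined as an empty sum, and it is exactly the ``big pipe'' $(M,0)$ in the SSRoB special case).

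I do not anticipate any real obstacle. The only small bookkeeping point is that the assignment $k \mapsto p(k) = \log g_k$ really is a bijection from $\{0,\ldots,K-1\}$ onto the support of $\valpha$, which follows from the strict ordering $g_0 < g_1 < \cdots < g_{K-1}$ inherited from $u_k \le g_k \le u_{k+1}$ and the hypothesis that each $g_k$ is a (necessarily distinct) power of $2$. Viewed this way, the lemma simply makes explicit that $\{A_i\}$ and the SSBaB pipe parameterization are two dual coordinate systems for the same family of concave piecewise-linear functions with dyadic breakpoints.
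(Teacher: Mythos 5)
Your proposal is correct and is essentially the same argument as the paper's: the paper packages the interval-by-interval verification as an induction on the breakpoints $g_0 < g_1 < \cdots$, whereas you phrase it as a ``same breakpoints, same slopes, same anchor at $0$'' characterization of a piecewise-linear function, but the core computation in both is the telescoping identity $\sum_{j\ge k}\alpha_{p(j)} = \delta_k - \delta_K = \delta_k$ together with $\sigma_0 = 0$ and the assumption (made explicitly in the paper's proof) that $\delta_K = 0$. The only small imprecision is your parenthetical justification of $g_0 < \cdots < g_{K-1}$ via ``$u_k \le g_k \le u_{k+1}$'': the paper states the corresponding sandwich for $b_k$, not $g_k$, and the strict ordering of the $g_k$ is really a non-redundancy hypothesis on the pipes (each pipe is cheapest somewhere), which is implicit in the lemma's setup and automatic for pipes produced by Lemma \ref{alphatodelta_lem}; this does not affect the substance of the argument.
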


\begin{proof}[Proof of Lemma \ref{alphatodelta_lem}]
By definition $f(x) = \sum_k \alpha_{p(k)}A_{p(k)}(x)$.  For any $k$, $f(x)$ is linear from $2^{p(k-1)}$ to $2^{p(k)}$ (we will assume $2^{p(-1)} = 0$ for consistency of notation), which will correspond to pipe $k$.  For $x \in [2^{p(k-1)},2^{p(k)}]$, the functions $A_{p(0)}(x), \ldots, A_{p(k-1)}(x)$ have leveled off, and $A_{p(k)}(x), \ldots, A_{p(K-1)}(x)$ are growing at rate 1.  Define $\delta_k$ as the slope of $f(x)$ in this interval: $\delta_k = \sum_{j \ge k} \alpha_{p(j)}$.

Now we can define $\sigma_k$ to match $f(x)$ in the interval $[2^{p(k-1)},2^{p(k)}]$:
\begin{align*}
\sigma_k + \delta_k2^{p(k-1)} = \sum_i \alpha_iA_i(2^{p(k-1)}) 
&= \sum_{j < k} \alpha_{p(j)}2^{p(j)} + \sum_{j \ge k} \alpha_{p(j)}2^{p(k-1)} \\
&= \sum_{j < k} \alpha_{p(j)}2^{p(j)} + \delta_k2^{p(k-1)} \\
\Rightarrow \sigma_k &= \sum_{j < k} \alpha_{p(j)}2^{p(j)}
\end{align*}
We also add a $K+1$st pipe such that $\delta_K = 0$ and $\sigma_K = \sum_k \alpha_{p(k)}2^{p(k)}$ to cover the interval after every $A_{p(k)}$ has leveled off.
Now, we claim $f(x) = \min_j \{\sigma_j + \delta_jx\}$:  for each $k$ we know $f(x) = \sigma_k + \delta_kx$ whenever $x \in [2^{p(k-1)},2^{p(k)}]$ by our choice of $\delta_k$ and $\sigma_k$, and by the concavity of $f(x)$ for each $j$ we have $\sigma_j + \delta_jx > f(x)$ when $x < 2^{p(j-1)}$ or $x > 2^{p(j)}$.  Therefore no other pipe can be cheaper in this interval.
Concavity also ensures that $\sigma_k < \sigma_{k+1}$ and $\delta_k > \delta_{k+1}$ for all $k$, yielding valid SSBaB pipes.
\end{proof}

\begin{proof}[Proof of Lemma \ref{deltatoalpha_lem}]
Let $K+1$ be the number of pipes, and $\delta_0 > \cdots > \delta_{K}$, $0 = \sigma_0 < \cdots < \sigma_{K}$.  Since we never route more than $D$ flow we may assume the cost function levels off at some $x \le D$, so that $\delta_K = 0$.  Define $p(k) = \log g_k$ for $0 \le k \le K-1$: when we change pipes at $g_k$ the slope of $f(x)$ drops, which can occur only because the term $\alpha_{p(k)}A_{p(k)}(x)$ levels off.  Recover $\alpha_{p(k)}$ by reversing the definitions in the proof of Lemma \ref{alphatodelta_lem}:  we have $\delta_k = \sum_{j \ge k} \alpha_{p(j)}$, so for $k \le K-1$ let $\alpha_{p(k)} = \delta_k - \delta_{k+1}$.

We now show by induction that $\sum_k \alpha_{p(k)}A_{p(k)}(x) = \min_j \{\sigma_j + \delta_jx\}$.  For the base case $x \in [0,g_0]$, we have
\begin{align*}
\min_j \{\sigma_j + \delta_jx\} = \delta_0x = (\delta_0 - \delta_K)x =\sum_{k=0}^{K-1}(\delta_k - \delta_{k+1})x = \sum_k \alpha_{p(k)}x = \sum_k \alpha_{p(k)}A_{p(k)}(x)
\end{align*}
Now assume that for $x \in [0,g_{i-1}]$ that $\sum_k \alpha_{p(k)}A_{p(k)}(x) = \min_j \{\sigma_j + \delta_jx\}$.  For $x \in (g_{i-1}, g_i]$, we know that $f(x) = \sigma_i + \delta_ix$.  Therefore,
\begin{align*}
\sigma_i + \delta_ix =& \left(\sigma_{i-1}+\delta_{i-1}2^{p(i-1)}\right) + \delta_i(x - 2^{p(i-1)}) \\
=&\sum_k \alpha_{p(k)}A_{p(k)}(2^{p(i-1)}) + \sum_{k=i}^{K-1}(\delta_k - \delta_{k+1})(x - 2^{p(i-1)}) \\
=& \sum_{k < i} \alpha_{p(k)}A_{p(k)}(2^{p(i-1)}) + \sum_{k \ge i} \alpha_{p(k)}x \\
=& \sum_k \alpha_{p(k)}A_{p(k)}(x)
\end{align*}
We use that pipes $i-1$ and $i$ have equal cost at $g_{i-1}$ in the first line and the induction hypothesis in the second line.
\end{proof}

We note that $\alpha_{p(k)}$ corresponds not to a particular SSBaB pipe, but to a breakpoint between pipes:  when we switch from pipe $k$ to $k+1$ at $2^{p(k)}$ flow, the slope of $f$ drops from $\delta_k$ to $\delta_{k+1}$, which is caused by the term $\alpha_{p(k)}A_{p(k)}(x)$ leveling off.

Given the above equivalence, we will use $\valpha$ and $\{(\sigma_k, \delta_k)\}_k$ interchangeably for the remainder of the paper, using whichever representation is more convenient and converting from one form to another using Lemmas \ref{alphatodelta_lem} and \ref{deltatoalpha_lem}.  However, the additional constraints that for some parameter $0 < \gamma < \frac{1}{2}$ we have $\delta_{k+1} < \gamma\delta_k$ and $\sigma_k < \gamma\sigma_{k+1}$ for all pipes $k$, will restrict the possible vectors $\valpha$ that can be run through the algorithm:

\begin{defn} Call $\valpha$ \emph{$\gamma$-regular} if the pipes found using Lemma \ref{alphatodelta_lem} satisfy $\delta_{k+1} < \gamma\delta_k$ and $\sigma_k < \gamma\delta_{k+1}$. 
\end{defn}

We note the following constraints that $\gamma$-regularity imposes on $\valpha$:

\begin{lem} \label{alpha_delta_lem}
If $\delta_{k+1} < \gamma\delta_k$, then $\alpha_{p(k)} > (1-\gamma)\delta_k$ and $\alpha_{p(k)} > \frac{1-\gamma}{\gamma}\alpha_{p(k+1)}$.
\end{lem}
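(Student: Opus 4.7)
The plan is to use the explicit relationship between $\valpha$ and the pipe parameters established in Lemma \ref{alphatodelta_lem}, namely $\delta_k = \sum_{j \ge k} \alpha_{p(j)}$, which immediately gives the telescoping identity $\alpha_{p(k)} = \delta_k - \delta_{k+1}$. Both claimed inequalities follow from this identity combined with the hypothesis $\delta_{k+1} < \gamma \delta_k$.

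For the first inequality, I would simply substitute: since $\delta_{k+1} < \gamma \delta_k$, we have
\[
\alpha_{p(k)} = \delta_k - \delta_{k+1} > \delta_k - \gamma \delta_k = (1-\gamma)\delta_k.
\]
For the second inequality, the key additional observation is that $\alpha_{p(k+1)} \le \delta_{k+1}$, since $\delta_{k+1} = \sum_{j \ge k+1}\alpha_{p(j)}$ is a sum of non-negative terms one of which is $\alpha_{p(k+1)}$. Combining with $\delta_{k+1} < \gamma \delta_k$ gives $\delta_k > \alpha_{p(k+1)}/\gamma$, and plugging into the first inequality yields
\[
\alpha_{p(k)} > (1-\gamma)\delta_k > \frac{1-\gamma}{\gamma}\,\alpha_{p(k+1)}.
\]

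There is essentially no obstacle here; the lemma is a short algebraic consequence of the definitions, and the only substantive step is recognizing that $\alpha_{p(k+1)}$ is bounded above by $\delta_{k+1}$ as a term in its defining sum.
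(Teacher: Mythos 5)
Your proof is correct and takes the same approach the paper implies with its one-line justification: both inequalities are read off directly from $\alpha_{p(k)} = \delta_k - \delta_{k+1}$ together with the hypothesis $\delta_{k+1} < \gamma\delta_k$. The one step the paper leaves implicit, $\alpha_{p(k+1)} \le \delta_{k+1}$, you correctly supply by noting $\alpha_{p(k+1)}$ is a single non-negative term in the sum defining $\delta_{k+1}$.
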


\begin{proof} 
These follow immediately from $\alpha_{p(k)} = \delta_k - \delta_{k+1}$ and $\delta_{k+1} < \gamma\delta_k$.
\end{proof}

\subsection{Approximation guarantee assuming regular $\valpha$}

We will first prove the existence of the separation oracle procedure $\mcA$ in Theorem \ref{alg_thm} for $\gamma$-regular $\valpha$ and later prove in Section \ref{regularization_section} that arbitrary $\valpha$ can be regularized with only an $O(1)$ change in $f(x)$ and $L$:

\begin{thm} \label{gmm_thm}
Let $\valpha$ be $\gamma$-regular, and let $f(x) = \sum_i \alpha_iA_i(x)$, and $L = \sum_i \alpha_iA_i(T_i^*)$.  Then the GMM algorithm finds a tree $T_{GMM}$ such that $\E\left[f(T_{GMM})\right] = O(L)$.
\end{thm}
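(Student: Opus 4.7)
The plan is to start from Lemma \ref{gmm_cost_lem}, which reduces the theorem to proving $\sum_k \E[P_k^\sigma + P_k^\delta] = O(L)$, where $L = \sum_k \alpha_{p(k)} A_{p(k)}(T_{p(k)}^*)$. Before splitting stages, I would establish two size estimates from $\gamma$-regularity. Using $\sigma_k - \sigma_{k-1} = \alpha_{p(k-1)} \cdot 2^{p(k-1)} = \alpha_{p(k-1)} g_{k-1}$ together with $\sigma_{k-1} < \gamma \sigma_k$ gives $\sigma_k < \alpha_{p(k-1)} g_{k-1}/(1-\gamma)$, while Lemma \ref{alpha_delta_lem} combined with a geometric sum of the ratios $\alpha_{p(k+j)}/\alpha_{p(k)}$ yields $\delta_k = \Theta(\alpha_{p(k)})$. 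These estimates translate stage-$k$ costs, which are naturally expressed in $\sigma_k$ and $\delta_k$, into multiples of $\alpha_{p(k-1)}$ and $\alpha_{p(k)}$ respectively.

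The core argument is a charging scheme: I would charge $\E[P_k^\sigma]$ to $\alpha_{p(k-1)} A_{p(k-1)}(T_{p(k-1)}^*)$ and $\E[P_k^\delta]$ to $\alpha_{p(k)} A_{p(k)}(T_{p(k)}^*)$, so that summing over $k$ produces $O(L)$. For the Steiner step, I would compare against the \emph{backbone} of $T_{p(k-1)}^*$, namely its edges carrying at least $g_{k-1} = 2^{p(k-1)}$ flow in the optimal routing of $\mathcal{D}$. Because $A_{p(k-1)}$ pays exactly $g_{k-1}$ per unit length on those edges, the backbone has total length at most $A_{p(k-1)}(T_{p(k-1)}^*)/g_{k-1}$. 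By Lemmas \ref{consolidation_lem} and \ref{current_demand_lem}, each $v \in \mathcal{D}^{(k)}$ carries expected demand $\Omega(g_{k-1})$, so its access path toward the backbone in $T_{p(k-1)}^*$ is short relative to the total access cost of $T_{p(k-1)}^*$; grafting these paths onto the backbone yields a candidate Steiner connector for $\mathcal{D}^{(k)} \cup \{r\}$ of expected length $O(A_{p(k-1)}(T_{p(k-1)}^*)/g_{k-1})$, which multiplied by $\sigma_k = O(\alpha_{p(k-1)} g_{k-1})$ gives the target bound on $\E[P_k^\sigma]$. For the LBFL step, $T_{p(k)}^*$ itself induces a feasible solution (after the factor-$3$ lower-bound relaxation permitted by the LBFL approximation of \cite{guha2000hpa}, using the fact that each backbone entry point of $T_{p(k)}^*$ serves $\Omega(g_k) = \Omega(b_k)$ demand) of cost at most $A_{p(k)}(T_{p(k)}^*)$, and multiplying by $\delta_k = O(\alpha_{p(k)})$ yields the corresponding bound on $\E[P_k^\delta]$.

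The main obstacle is the Steiner charging: $\mathcal{D}^{(k)}$ is a random subset of $\mathcal{D}$ produced by the GMM algorithm's own consolidations, not a structurally chosen subset, so comparing it against the backbone of the independently optimized tree $T_{p(k-1)}^*$ requires a careful expectation argument. I would use Lemma \ref{consolidation_lem}, which guarantees that after any consolidation step the expected demand at each original node $v$ remains $d_v$, to argue that in expectation the cost of routing the consolidated demand back to each $v \in \mathcal{D}^{(k)}$ along $T_{p(k-1)}^*$ matches the cost of routing the original demands along their $T_{p(k-1)}^*$-paths. This is the essential technical novelty compared with the original GMM analysis, which only had to bound stage costs against a single optimum tree $T_f^*$ rather than against a separate optimum $T_i^*$ for each atomic function.
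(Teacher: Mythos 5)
Your high-level charging scheme matches the paper's: bound $\E[P_k^\sigma]$ against $\alpha_{p(k-1)}A_{p(k-1)}(T_{p(k-1)}^*)$ using the edges of $T_{p(k-1)}^*$, and bound $\E[P_k^\delta]$ against $\alpha_{p(k)}A_{p(k)}(T_{p(k)}^*)$ by cutting $T_{p(k)}^*$ into a feasible LBFL solution. Your parameter estimates $\sigma_k \le \alpha_{p(k-1)}g_{k-1}/(1-\gamma)$ and $\delta_k = \Theta(\alpha_{p(k)})$ are exactly what the paper derives and uses. The backbone-versus-access split of $T_{p(k-1)}^*$ is also precisely the case split in the paper's Equation \eqref{steiner_cost_eq}.

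The gap is in how you bound the expected length of the access portion of the Steiner connector. You invoke Lemma \ref{consolidation_lem} ($\E[\hat d_v]=d_v$) and Lemma \ref{current_demand_lem} ($\E[d_v^{(k)}]\ge b_{k-1}/3$) to conclude that ``in expectation the cost of routing the consolidated demand back to each $v\in\mathcal{D}^{(k)}$ along $T_{p(k-1)}^*$ matches the cost of routing the original demands.'' But the Steiner step pays $\sigma_k$ per unit length \emph{regardless of flow}: bounding the expected demand that traverses an edge does not bound the probability that the edge is used at all, which is what multiplies $\sigma_k l_e$. The fact that any surviving $v\in\mathcal{D}^{(k)}$ carries $\Omega(g_{k-1})$ demand also says nothing about how long its access path in $T_{p(k-1)}^*$ is, since that depends on $v$'s position in the graph, not its stage-$k$ demand. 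What you actually need, and what the paper supplies, is an edge-inclusion probability bound: a node $v$ is retained into $\mathcal{D}^{(k)}$ with probability $d_v/\sum_{u\to f}d_u \le 3d_v/b_{k-1}$ (because the LBFL facility $f$ serving $v$ has at least $b_{k-1}/3$ original demand), and a union bound over the demand nodes routing through $e$ in $T_{p(k-1)}^*$ gives $\Pr[e\in W_k]\le 3x_e^*/b_{k-1}$. That bound, not the expected-demand identity, is what makes the access edges charge cleanly against $\sum_e l_e A_{p(k-1)}(x_e^*)$. (Lemma \ref{consolidation_lem} is the right tool, but for the LBFL step, where the cost really is $\delta_k\sum_e l_e y_e$ and $\E[y_e]=x_e$ is exactly what you want.) Your LBFL sketch is also missing the Lemma \ref{b_g_lem} factor $b_k/g_k \le (1-2\gamma^2)/\gamma$, since the candidate cuts at $b_k$ flow but $A_{p(k)}$ saturates at $g_k < b_k$; this is only a constant so it doesn't sink the argument, but it does need to be accounted for.
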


Roughly, our proof bounds the cost of the pipes laid in phase $k$ of the algorithm by $\alpha_{p(k)}A_{p(k)}(T_{p(k)}^*)$.  Using Lemma \ref{gmm_cost_lem} we concentrate on $P_k^{\delta}$ and $P_k^{\sigma}$ and ignore the other costs.  
First, we bound the cost of the Steiner tree steps:

\begin{lem} 
Let $\pi_S$ be the approximation ratio for Steiner tree.  Then we have $\sum_k \E[P_k^{\sigma}] \le \frac{3\pi_S}{1-\gamma}L$.\label{steiner_cost_lem}
\end{lem}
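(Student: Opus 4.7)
The plan is to prove the per-stage inequality
\[
\E[P_k^\sigma] \;\le\; \frac{3\pi_S}{1-\gamma}\,\alpha_{p(k-1)}\, A_{p(k-1)}(T^*_{p(k-1)})
\]
for each $k\ge 1$ (the boundary $k=0$ is harmless, since we may take $\sigma_0=0$ as in Lemma~\ref{deltatoalpha_lem}), and then sum over $k$ and re-index $k\mapsto k-1$ to obtain $\frac{3\pi_S}{1-\gamma}\,L$ on the right-hand side.

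First, I would observe that because the Steiner step computes a $\pi_S$-approximate Steiner tree on $\mathcal{D}^{(k)}\cup\{r\}$ and the subsequent cutting only discards edges,
\[
P_k^\sigma \;\le\; \pi_S\,\sigma_k\,M_k,
\]
where $M_k$ is the minimum Steiner-tree length on $\mathcal{D}^{(k)}\cup\{r\}$. Next I would apply $\gamma$-regularity to the $\sigma$'s. Using the identity $\sigma_k = \sigma_{k-1}+\alpha_{p(k-1)}2^{p(k-1)}$ from Lemma~\ref{alphatodelta_lem} together with $\sigma_{k-1}<\gamma\sigma_k$ gives
\[
\sigma_k \;<\; \frac{\alpha_{p(k-1)}\,2^{p(k-1)}}{1-\gamma}.
\]
Consequently, everything reduces to the key sub-claim
\[
\E[M_k] \;\le\; \frac{3\,A_{p(k-1)}(T^*_{p(k-1)})}{2^{p(k-1)}},
\]
after which substitution immediately yields the per-stage inequality.

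The heart of the proof is this last bound, and the plan is to use $T^*_{p(k-1)}$ itself as the witness. Consider its \emph{saturated subtree} $T^H$, consisting of those edges carrying flow at least $2^{p(k-1)}$ in $T^*_{p(k-1)}$. Because flow is non-increasing from leaves toward $r$, $T^H$ is a connected subtree containing $r$, and each of its edges contributes exactly $2^{p(k-1)} l_e$ to $A_{p(k-1)}(T^*_{p(k-1)})$, so that $\mathrm{length}(T^H)\le A_{p(k-1)}(T^*_{p(k-1)})/2^{p(k-1)}$. By Lemma~\ref{current_demand_lem} together with $b_{k-1}\ge u_{k-1}$ and the observation (from $\gamma$-regularity and Lemma~\ref{alpha_delta_lem}) that $u_{k-1}=\Theta(2^{p(k-1)})$, each representative $v\in\mathcal{D}^{(k)}$ carries in expectation $\Omega(2^{p(k-1)})$ units of consolidated demand. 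So $\mathcal{D}^{(k)}$ looks, in expectation, like a set of nodes sitting on (or near) $T^H$: a Steiner tree on $\mathcal{D}^{(k)}\cup\{r\}$ can be built by taking $T^H$ and appending short low-flow tendrils to reach each $v\in\mathcal{D}^{(k)}$, and the expected total length remains bounded by a constant times $\mathrm{length}(T^H)$.

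The main obstacle is making this last charging argument precise: formally producing an (expected) Steiner tree on $\mathcal{D}^{(k)}\cup\{r\}$ whose length is at most $3\,A_{p(k-1)}(T^*_{p(k-1)})/2^{p(k-1)}$. The factor of $3$ is not cosmetic; it is traceable to the LBFL relaxation underlying Lemma~\ref{current_demand_lem}, which only guarantees that each consolidation group holds $b_{k-1}/3$ (rather than $b_{k-1}$) units of demand, and it is this slack that controls the total length of the low-flow tendrils attaching $\mathcal{D}^{(k)}$ to $T^H$. Once the sub-claim is in hand, the three inequalities compose to give the per-stage bound, and summing over $k$ (with the $k\mapsto k-1$ re-index on the right) completes the proof of the lemma.
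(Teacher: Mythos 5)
Your per-stage decomposition, choice of witness tree $T^*_{p(k-1)}$, and the $\gamma$-regularity bound $\sigma_k \le \alpha_{p(k-1)}2^{p(k-1)}/(1-\gamma)$ all match the paper, and factoring out $\sigma_k$ so that the heart of the matter becomes bounding $\E[M_k]$ is a perfectly reasonable reorganization. But the proof as written has a genuine gap precisely where you flag ``the main obstacle,'' and the sketch you offer to fill it is not correct. You claim the expected length of the tendrils attaching $\mathcal{D}^{(k)}$ to the saturated subtree $T^H$ ``remains bounded by a constant times $\mathrm{length}(T^H)$.'' That is false: $\mathrm{length}(T^H)$ accounts only for edges carrying $\ge 2^{p(k-1)}$ flow, while the tendrils live on low-flow edges, whose total contribution to $A_{p(k-1)}(T^*_{p(k-1)})$ can dominate $\mathrm{length}(T^H)$ by an arbitrary factor (picture a single long high-flow trunk ending in a huge star of cheap low-flow leaves). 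The correct target is a bound against $A_{p(k-1)}(T^*_{p(k-1)})/2^{p(k-1)}$, which includes both the high- and low-flow mass, not against $\mathrm{length}(T^H)$ alone.

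What is missing is the per-edge probability argument that makes the charging precise. Fix $e \in T^*_{p(k-1)}$ with flow $x^*_e$. Edge $e$ lies in the connecting subtree $W_k$ for $\mathcal{D}^{(k)}\cup\{r\}$ only if some $v\in\mathcal{D}^{(k)}$ sits downstream of $e$, and the previous LBFL consolidation step picked each candidate $v$ with probability at most $3d_v/b_{k-1}$ (the factor $3$ coming from the relaxed facility lower bound behind Lemma~\ref{current_demand_lem}). A union bound over the downstream demand then gives $\Pr[e\in W_k]\le\min\{1,\,3x^*_e/b_{k-1}\}$. From here your route closes: for edges with $x^*_e > 2^{p(k-1)}$ use the trivial bound and $l_e = A_{p(k-1)}(x^*_e)l_e/2^{p(k-1)}$; for edges with $x^*_e \le 2^{p(k-1)}$ use the probabilistic bound and $b_{k-1}\ge g_{k-1}=2^{p(k-1)}$ (GMM Lemma~3.5, quoted in Lemma~\ref{b_g_lem}) to get $3x^*_e l_e/b_{k-1}\le 3A_{p(k-1)}(x^*_e)l_e/2^{p(k-1)}$. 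Summing yields $\E[M_k]\le 3A_{p(k-1)}(T^*_{p(k-1)})/2^{p(k-1)}$, which is your sub-claim. Note also that the regularity fact you want here is $b_{k-1}\ge 2^{p(k-1)}$, not $u_{k-1}=\Theta(2^{p(k-1)})$; the latter fails already at $k=1$ where $u_0=\sigma_0/\delta_0=0$. The paper takes a cosmetically different route, keeping $\sigma_k$ inside the edge sum and using the two bounds $\sigma_k/b_{k-1}\le\alpha_{p(k-1)}/(1-\gamma)$ and $\sigma_k/2^{p(k-1)}\le\alpha_{p(k-1)}/(1-\gamma)$ on the two edge classes, but the per-edge probability estimate is the shared engine, and it is exactly what your proposal lacks.
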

\begin{proof}
We need to bound the cost of a Steiner tree spanning the current demands $\mathcal{D}^{(k)}$ with cost per unit length $\sigma_k$.  If $k=0$, then $\sigma_k = 0$ and we have nothing to bound, so assume $k > 0$.

We use the edges in $T_{p(k-1)}^*$.  Note that it spans $\mathcal{D} \cup \{r\}$ and hence $\mathcal{D}^{(k)} \cup \{r\}$, and let $W_k \subseteq T_{p(k-1)}^*$ be the subset of edges spanning these nodes.  
By Lemma \ref{current_demand_lem} each $v \in \mathcal{D}^{(k)} $ has aggregated at least $\E[d_v^{(k)}] \ge \frac{b_{k-1}}{3}$ demand.  At the end of the previous LBFL phase, we chose a node $v$ for consolidation from the set of all $u$ routing to facility $f$ with probability $\frac{d_v}{\sum_{u\rightarrow f} d_u} \le \frac{3d_v}{b_{k-1}}$.  An edge is in $W_k$ only if some $v \in \mathcal{D}^{(k)}$ routes through it, so by the union bound an edge carrying $x_e^*$ demand in $T_{p(k-1)}^*$ is in $W_k$ with probability at most $\frac{3x_e^*}{b_{k-1}}$.

The tree $W_k$ pays $\sigma_k$ for any amount of flow, whereas $T_{p(k-1)}^*$ pays $A_{p(k-1)}(x_e^*) = \min\{2^{p(k-1)}, x_e^*\}$ to send $x_e^*$ flow on $e$.  Then the cost of $W_k$ is
\begin{equation}  \label{steiner_cost_eq}
\begin{split} 
\E[W_k] &= \sigma_k\sum_e \Pr[e \in W_k]l_e
= \sigma_k \sum_e \Pr[e \in W_k]l_e\frac{A_{p(k-1)}(x_e^*)}{\min\{x_e^*,2^{p(k-1)}\}} \\
&\le \sigma_k \sum_{e : x_e^* \le 2^{p(k-1)}} \frac{3x_e^*}{b_{k-1}} \frac{A_{p(k-1)}(x_e^*)}{x_e^*}l_e 
+ \sigma_k \sum_{e : x_e^* > 2^{p(k-1)}} 1\cdot\frac{A_{p(k-1)}(x_e^*)}{2^{p(k-1)}}l_e \\
&= 3\frac{\sigma_k}{b_{k-1}} \sum_{e : x_e^* \le 2^{p(k-1)}} A_{p(k-1)}(x_e^*)l_e 
+ \frac{\sigma_k}{2^{p(k-1)}} \sum_{e : x_e^* > 2^{p(k-1)}} A_{p(k-1)}(x_e^*)l_e 
\end{split}
\end{equation}

We need to bound $\frac{\sigma_k}{b_{k-1}}$ and $\frac{\sigma_k}{2^{p(k-1)}}$.  For the former term, 
\[
\frac{\sigma_k}{b_{k-1}} = \frac{\sigma_k(2\gamma\delta_{k-1} - \delta_k)}{\sigma_k - 2\gamma\sigma_{k-1}} 
\le \frac{\sigma_k(2\gamma\delta_{k-1} - \delta_k)}{2\gamma\sigma_k(1- \gamma)} 
\le \frac{2\gamma(\delta_{k-1} - \delta_k)}{2\gamma(1- \gamma)} 
= \frac{\alpha_{p(k-1)}}{1-\gamma}
\]
using that $b_{k-1} = \frac{\sigma_k - 2\gamma\sigma_{k-1}}{2\gamma\delta_{k-1} - \delta_k}$ by definition, the $\gamma$-regularity constraints on $\sigma_{k-1}$, and the fact that $2\gamma < 1$.
For the latter term,
\begin{align*}
\frac{\sigma_k}{2^{p(k-1)}} = \frac{\sigma_{k-1} + \alpha_{p(k-1)}2^{p(k-1)}}{2^{p(k-1)}} &\le \frac{\gamma\sigma_k + \alpha_{p(k-1)}2^{p(k-1)}}{2^{p(k-1)}} = \gamma\frac{\sigma_k}{2^{p(k-1)}} + \alpha_{p(k-1)} \\
\Rightarrow (1-\gamma)\frac{\sigma_k}{2^{p(k-1)}} &\le \alpha_{p(k-1)} \Rightarrow \frac{\sigma_k}{2^{p(k-1)}} \le \frac{\alpha_{p(k-1)}}{1-\gamma}
\end{align*}
using the formula for $\sigma_k$ in Lemma \ref{alphatodelta_lem} and $\gamma$-regularity.

Plug these into the final line in equation \eqref{steiner_cost_eq} above:
\begin{align*}
\E[W_k] \le& \frac{\alpha_{p(k-1)}}{1-\gamma} \left(3\sum_{e : x_e^* \le 2^{p(k-1)}} A_{p(k-1)}(x_e^*)l_e +\sum_{e : x_e^* > 2^{p(k-1)}} A_{p(k-1)}(x_e^*)l_e\right) \\
= & \left(\frac{3}{1-\gamma}\right)\alpha_{p(k-1)}A_{p(k-1)}(T_{p(k-1)}^*)
\end{align*}

We lose another factor of $\pi_S$ in approximating the Steiner tree.  Sum over all $k$ to bound $\sum_k \E[P_k^{\sigma}]$ by $\frac{3\pi_S}{1-\gamma}L$.
\end{proof}

Analyzing the LBFL step requires an additional lemma bounding the difference between $g_k$ and $b_k$:

\begin{lem}  For every $k$, $g_k \le b_k \le \frac{1-2\gamma^2}{\gamma}g_k$.
\label{b_g_lem}
\end{lem}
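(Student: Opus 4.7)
The plan is to solve the two defining equations algebraically for the explicit formulas $g_k = (\sigma_{k+1}-\sigma_k)/(\delta_k - \delta_{k+1})$ and $b_k = (\sigma_{k+1} - 2\gamma\sigma_k)/(2\gamma\delta_k - \delta_{k+1})$, and then compare the resulting fractions directly. Both denominators are strictly positive: $\delta_k > \delta_{k+1}$ by monotonicity of the pipe data, and $2\gamma\delta_k - \delta_{k+1} > 2\gamma\delta_k - \gamma\delta_k = \gamma\delta_k > 0$ by $\gamma$-regularity. Both numerators are also positive using $\sigma_k < \gamma\sigma_{k+1} < \sigma_{k+1}/2$.

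For the lower bound $g_k \le b_k$, I would cross-multiply (valid because the denominators are positive) and expand. After cancellation of symmetric terms, the difference $b_k - g_k$ becomes a positive multiple of $(1-2\gamma)(\sigma_{k+1}\delta_k - \sigma_k\delta_{k+1})$, which is nonnegative because $\gamma < 1/2$ and $\sigma_{k+1}\delta_k \ge \sigma_k\delta_{k+1}$ follows trivially from $\sigma_{k+1} \ge \sigma_k$ and $\delta_k \ge \delta_{k+1}$.

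For the upper bound, the cleanest route I see is to introduce the dimensionless ratios $\rho = \sigma_k/\sigma_{k+1}$ and $\tau = \delta_{k+1}/\delta_k$, each lying in $[0,\gamma)$ by $\gamma$-regularity. A short calculation rewrites
\[
\frac{b_k}{g_k} \;=\; \frac{1-2\gamma\rho}{1-\rho} \cdot \frac{1-\tau}{2\gamma-\tau},
\]
which decouples as a product of two one-variable functions. The derivatives of the two factors work out to $(1-2\gamma)/(1-\rho)^2$ and $(1-2\gamma)/(2\gamma-\tau)^2$ respectively, both strictly positive because $\gamma < 1/2$. Hence $b_k/g_k$ is monotonically nondecreasing on the box $[0,\gamma]^2$ and attains its supremum at the corner $\rho = \tau = \gamma$, where direct substitution gives exactly $(1-2\gamma^2)/\gamma$, matching the stated bound.

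The main obstacle is spotting the product-separable structure of $b_k/g_k$ after the change of variables; a naive cross-multiplication of the two fractions produces a tangled polynomial inequality in the four pipe parameters and $\gamma$ that is solvable but unpleasant. Once the decoupling is noticed, monotonicity in each variable separately reduces the lemma to a single evaluation at the extremal corner $\rho = \tau = \gamma$, and the essential role of the hypothesis $\gamma < 1/2$ becomes transparent as the condition that makes both partial derivatives positive.
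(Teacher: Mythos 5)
Your proof is correct. The upper bound is, at heart, the same factorization the paper uses: both you and the authors write $b_k/g_k$ as a product of a $\sigma$-factor and a $\delta$-factor and bound each separately. The paper bounds each factor by splitting off a $1$ and manipulating the leftover term using $\sigma_k < \gamma\sigma_{k+1}$ and $\delta_{k+1}<\gamma\delta_k$; your change of variables $\rho=\sigma_k/\sigma_{k+1}$, $\tau=\delta_{k+1}/\delta_k$ followed by the derivative check $(1-2\gamma)/(1-\rho)^2>0$ and $(1-2\gamma)/(2\gamma-\tau)^2>0$ is a cleaner way to reach exactly the same conclusion, and it makes the role of $\gamma<1/2$ more visible. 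The place where you genuinely diverge is the lower bound $g_k\le b_k$: the paper simply cites Lemma~3.5 of GMM, whereas you supply a self-contained proof by cross-multiplication, reducing the difference to a positive multiple of $(1-2\gamma)(\sigma_{k+1}\delta_k-\sigma_k\delta_{k+1})$ — the algebra indeed cancels to exactly that expression — which is nonnegative by monotonicity of the pipe parameters. Both the self-contained lower bound and the monotonicity framing of the upper bound are small but genuine improvements in transparency.
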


\begin{proof}
The bound $g_k \le b_k$ follows from Lemma 3.5 in GMM \cite{guha2001cfa}.  For the other inequality, from the definition of $b_k$ and $g_k$ we have
\begin{align*}
g_k &= \frac{\sigma_{k+1}-\sigma_{k}}{\delta_{k}-\delta_{k+1}}  & b_k &= \frac{\sigma_{k+1}-2\gamma\sigma_k}{2\gamma\delta_k - \delta_{k+1}} 
&\Rightarrow \frac{b_k}{g_k} &= \frac{\sigma_{k+1}-2\gamma\sigma_k}{\sigma_{k+1}-\sigma_{k}} \cdot\frac{\delta_{k}-\delta_{k+1}}{2\gamma\delta_k - \delta_{k+1}} 
\end{align*}
For the ratio of $\sigma$ terms,
\begin{align*}
\frac{\sigma_{k+1} - 2\gamma\sigma_k}{\sigma_{k+1} - \sigma_k} &= \frac{\sigma_{k+1} - \sigma_k}{\sigma_{k+1} - \sigma_k} + (1-2\gamma)\frac{\sigma_k}{\sigma_{k+1} - \sigma_k} \\
& < 1 + (1-2\gamma)\frac{\sigma_k}{\left(\frac{1}{\gamma}-1\right)\sigma_k}  = 1 + \frac{\gamma-2\gamma^2}{1-\gamma}
= \frac{1-2\gamma^2}{1-\gamma}
\end{align*}
Similarly, for the $\delta$s,
\begin{align*}
\frac{\delta_k - \delta_{k+1}}{2\gamma\delta_k-\delta_{k+1}} 
&= \frac{2\gamma\delta_k - \delta_{k+1}}{2\gamma\delta_k - \delta_{k+1}} + (1-2\gamma)\frac{\delta_k}{2\gamma\delta_k - \delta_{k+1}} \\
&< 1 + (1-2\gamma)\frac{\delta_k}{(2\gamma-\gamma)\delta_k} = \frac{1-\gamma}{\gamma}
\end{align*}
Combining the 2 bounds,
\[
\frac{b_k}{g_k} \le \frac{1-2\gamma^2}{1-\gamma}\frac{1-\gamma}{\gamma} = \frac{1-2\gamma^2}{\gamma}
\]
\end{proof}

Now we can bound the LBFL cost $\E[P_k^{\delta}]$:

\begin{lem} We have that $\sum_k \E[P_k^{\delta}] \le 2\pi_F\frac{1-2\gamma^2}{\gamma-\gamma^2}L$ where $\pi_F$ is the approximation ratio for the standard (non-load-balanced) facility location problem.
\label{facility_cost_lem}
\end{lem}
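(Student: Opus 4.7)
The plan is to mirror the argument used for Lemma \ref{steiner_cost_lem}: for each stage $k$, exhibit a specific feasible solution to the LBFL instance arising in step 3 whose cost is at most a constant times $\alpha_{p(k)} A_{p(k)}(T_{p(k)}^*)$, and then sum over $k$. The LBFL instance at stage $k$ has edge cost $\delta_k$ per unit length and facility lower bound $b_k$, and the GMM-style approximation returns a solution of cost at most $2\pi_F$ times the optimum of that instance, so the task reduces to upper bounding a feasible cost built from $T_{p(k)}^*$.

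For the feasible solution, I take $T_{p(k)}^*$ (which already spans $\mathcal{D}\cup\{r\}$) and process it bottom-up, accumulating the unassigned demand carried along each edge. Whenever the accumulated demand in the current subtree reaches $b_k$, I declare the current node a facility and reset the accumulator for that subtree; any demand left over near $r$ is routed directly to $r$. Each opened facility then collects at least $b_k$ demand, which is feasible for lower bound $b_k$. Moreover, the flow $f_e$ on any kept edge $e$ equals the still-unassigned demand in the subtree below $e$, so it is at most $x_e^*$ (the flow in $T_{p(k)}^*$) and, because we would otherwise have already finalized that subtree, also strictly less than $b_k$; hence
\begin{equation*}
f_e \;\le\; \min\{x_e^*,\, b_k\}.
\end{equation*}

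From here the calculation is mechanical. Since $\valpha$ is $\gamma$-regular, the pipes produced by Lemma \ref{alphatodelta_lem} satisfy $g_k = 2^{p(k)}$, so Lemma \ref{b_g_lem} gives $b_k \le \tfrac{1-2\gamma^2}{\gamma}\,2^{p(k)}$. A case split on whether $x_e^* \le 2^{p(k)}$ or $x_e^* > 2^{p(k)}$ then shows $\min\{x_e^*, b_k\} \le \tfrac{1-2\gamma^2}{\gamma}\, A_{p(k)}(x_e^*)$ edge by edge, so the feasible LBFL cost is at most $\delta_k \tfrac{1-2\gamma^2}{\gamma}\, A_{p(k)}(T_{p(k)}^*)$. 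Applying the $2\pi_F$ LBFL approximation, invoking Lemma \ref{alpha_delta_lem} in the form $\delta_k \le \alpha_{p(k)}/(1-\gamma)$, and summing over stages yields
\begin{equation*}
\sum_k \E[P_k^{\delta}] \;\le\; 2\pi_F\,\frac{1-2\gamma^2}{\gamma(1-\gamma)}\,\sum_k \alpha_{p(k)} A_{p(k)}(T_{p(k)}^*) \;=\; 2\pi_F\,\frac{1-2\gamma^2}{\gamma - \gamma^2}\, L.
\end{equation*}

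The main delicate point is the flow bookkeeping in the bottom-up construction: one must verify that whenever a facility is opened and a chunk of demand is removed from further routing, the residual flow on every surviving edge is still bounded by \emph{both} $x_e^*$ and $b_k$ simultaneously, and that the leftover component near $r$ (possibly below $b_k$) is legitimately handled by GMM's convention of routing to $r$. Once this is in place, every step reduces to the already-quoted Lemmas \ref{b_g_lem} and \ref{alpha_delta_lem} and the case split above.
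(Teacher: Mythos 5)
Your overall strategy is the same as the paper's: build a feasible LBFL solution at stage $k$ out of $T_{p(k)}^*$ by iteratively aggregating demand until $b_k$ accumulates and placing facilities, bound the residual flow on each kept edge by $\min\{x_e^*, b_k\}$, convert $b_k$ to $\frac{1-2\gamma^2}{\gamma}A_{p(k)}(x_e^*)$ via Lemma~\ref{b_g_lem}, absorb $\delta_k$ into $\alpha_{p(k)}/(1-\gamma)$ via Lemma~\ref{alpha_delta_lem}, and sum. The constant comes out the same. However, there are two places where your write-up glides past something the paper treats carefully, and one of them is a genuine gap.

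The genuine gap is the leftover-demand step. You route the sub-$b_k$ residual near $r$ ``directly to $r$'' and appeal to ``GMM's convention,'' but that convention only applies when the \emph{total} demand is below $b_k$; it does not license a facility at $r$ serving less than $b_k$ when total demand is ample. As you have set it up, the constructed solution could open a facility at $r$ with load below $b_k$, which is simply infeasible for the LBFL instance, so it cannot be used to upper-bound OPT. The paper avoids this by routing the residual along $T_{p(k)}^*$ to the \emph{nearest already-opened facility}, which keeps every facility at load $\ge b_k$; this then forces the residual onto edges oriented away from $r$, and the paper supplies a separate argument that those edges still satisfy $x_e \le x_e^*$ (they carried $\ge b_k$ flow in $T_{p(k)}^*$ precisely because a facility sits upstream, while the residual pushed up them is $< b_k$). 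That extra argument is exactly the part your construction needs and omits.

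The smaller omission is the switch from original to current demands. The LBFL instance is posed on $\mathcal{D}$, so the $2\pi_F$ guarantee and your feasible-solution cost are stated in terms of the original demands, but $P_k^{\delta}$ is the cost of routing the stage-$k$ demands $\mathcal{D}^{(k)}$. The paper closes this loop with Lemma~\ref{consolidation_lem} ($\E[\hat d_v] = d_v$), showing the expected per-edge flow with current demands equals the flow with original demands. Your proof never mentions this step; it is not hard, but without it the chain from ``feasible LBFL cost'' to $\E[P_k^{\delta}]$ is not actually closed.
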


\begin{proof}
In the shortest path tree step, the GMM algorithm solves an LBFL problem on the original demands $\mathcal{D}$ with facility lower bound $b_k$ and edge cost per unit length $\delta_k$.  
We will construct a feasible solution using the edges of $T_{p(k)}^*$.
Orient the edges towards $r$, and find the farthest upstream (i.e.\ away from $r$) edge routing at least  $b_{k}$ flow.  Cut the edge, and place a facility at the upstream node.  Subtract this flow from downstream edges, and repeat the procedure.  
If we finish with less than $b_k$ flow at the root node, we route each demand still reaching the root from its source vertex along the tree to the nearest existing facility (according to distances in $T_{p(k)}^*$).  Let $F_k$ be the resulting forest, and note that it has at least $b_{k}$ flow at each facility.

For an edge $e$ let $x_e$ be the amount $F_k$ routes on $e$ when the demands $\mathcal{D}$ are routed, and $x_e^*$ the amount that $T_{p(k)}^*$ routes on $e$.  We now show that $x_e \le x_e^*$.  If we finish cutting $T_{p(k)}^*$ with at least $b_k$ at the root then all flows are a subset of the flows in $T_{p(k)}^*$ so $x_e \le x_e^*$.  If we end up with too little demand for a facility in the final step then some of those demands will not be flowing downstream towards $r$ in $F_k$.  For each edge they take towards $r$, they are following the routing in $T_{p(k)}^*$, so $x_e \le x_e^*$.  For each $e$ edge taken away from $r$, we are no longer following $T_{p(k)}^*$, but we must be moving upstream towards the nearest facility.  This implies that in the tree $T_{p(k)}^*$ edge $e$ carried more than $b_k$ flow because all demand at the upstream facility flowed through $e$ towards $r$.  Since we are sending strictly less than $b_k$ demand upstream we still have $x_e \le x_e^*$.

The forest $F_k$ never routes more than $b_k$ flow, so $x_e \le b_k$.  
When $x_e^* \le g_k$, $x_e^* = A_{p(k)}(x_e^*)$, so $x_e \le A_{p(k)}(x_e^*)$.  
Since $A_{p(k)}$ levels off at $g_k$, this may not hold for $x_e^* > g_k$ , but by Lemma \ref{b_g_lem} $b_k \le \frac{1-2\gamma^2}{\gamma}g_k$.
Therefore $x_e \le b_k \le \frac{1-2\gamma^2}{\gamma}A_{p(k)}(x_e^*)$ when $x_e^* \ge g_k$.

Now let $y_e$ be the flow $F_k$ routes on edge $e$ when the current, stage $k$ demands $\mathcal{D}^{(k)}$ are used.  By Lemma \ref{consolidation_lem}, $\E[\hat{d_v}] = d_v$ for each $v \in \mathcal{D}$.  Summing over all the demands that contribute to an edge's flow, we have $\E[y_e] = x_e$.

The cost of $F_k$ with $\delta_j$ cost per unit edge length is
\begin{align*}
\E\left[\delta_k \sum_e l_ey_e\right] = \delta_k\sum_e l_e x_e \le \delta_k\sum_e l_e\left(\frac{1-2\gamma^2}{\gamma}A_{p(k)}(x_e^*)\right)
\le \left(\frac{\alpha_{p(k)}}{1-\gamma}\right)\left(\frac{1-2\gamma^2}{\gamma}\right)A_{p(k)}(T_{p(k)}^*)
\end{align*}
using $\frac{1-2\gamma^2}{\gamma} > 1$ and $\alpha_{p(k)} \ge (1-\gamma)\delta_k$ from Lemma \ref{alpha_delta_lem}.

We can find an approximate LBFL solution that is a $2\pi_F$-approximation to the optimal cost and reduces the facility lower bound by a factor of at most $3$.  Therefore
\[
\E[P_k^{\delta}] \le 2\pi_F\E[F_k] \le \left(2\pi_F\frac{1-2\gamma^2}{\gamma - \gamma^2}\right)\alpha_{p(k)}A_{p(k)}(T_{p(k)}^*)
\]

Sum over all values of $k$ to bound the expected cost by $2\pi_F\frac{1-2\gamma^2}{\gamma-\gamma^2}L$.
\end{proof}

\begin{proof}[Proof of Theorem \ref{gmm_thm}]
Combining the bounds in Lemmas \ref{gmm_cost_lem}, \ref{facility_cost_lem}, and \ref{steiner_cost_lem}:
\[
\E[f(T_{GMM})] \le 4\left(2\pi_F\frac{1-2\gamma^2}{\gamma-\gamma^2} + \frac{3\pi_S}{1-\gamma}\right)L
\]
\end{proof}

This completes the analysis of $\mcA$ for $\gamma$-regular $\valpha$.  If arbitrary $\valpha$ can be $\gamma$-regularized for some $0 < \gamma < \frac{1}{2}$ it follows that $R = O(1)$.

Recent algorithms for SSBaB are based on the 
Gupta, Kumar, and Roughgarden (GKR) algorithm \cite{gupta2003sab, gupta2007approximation}, which achieves a better approximation ratio than GMM with a simpler analysis, and one may wonder whether we could reap the same benefits by basing our proof around this algorithm instead.  One round of GKR is roughly equivalent to one round of GMM---starting with about $g_{k-1}$ demand at a subset of nodes and ending with about $g_k$ demand at a smaller subset---but the GKR analysis bounds the entire cost of a round using only one tree, whereas GMM requires two.  However, each tree required by GMM can be easily constructed from some $T_{i}^*$ in $O(\alpha_iA_i(T_i^*))$, but building the tree needed by GKR and within the right bounds seems trickier.  Note that Lemmas \ref{steiner_cost_lem} and \ref{facility_cost_lem}
use two different trees, $T_{p(k-1)}^*$ and $T_{p(k)}^*$, analyzed in two different ways, either fixed or linear cost per edge.  Although this conveniently matches the GMM algorithm, it also required for the proof to work.  
Using only a single Steiner tree on a subset of the nodes as in GKR allows less flexibility,
so a proof may require a different approach or more substantial changes to the original GKR analysis.

\section {Handling Arbitrary $\valpha$}
\label{regularization_section}

Given any $\valpha$, where $\alpha_i \ge 0$, defining $f(x)$, a concave cost function, and $L$, the multi-level cost, we need to find regular $\valpha'$ defining $f'(x)$ and $L'$ such that $f(x) = O(f'(x))$ $\forall x$, and $L' = O(L)$.  Then applying Theorem \ref{gmm_thm} to $\valpha'$ gives $f'(T_{GMM}) = O(L')$, and 
\[
f(T_{GMM}) = O(f'(T_{GMM})) = O(L') = O(L)
\]
satisfying the precondition of Theorem \ref{gmm_thm}.
Note that we can allow $f$ to grow and $L$ to shrink arbitrarily in the transformation to $f'$ and $L'$, but we need to bound increases in $L$ and decreases in $f$.  By scaling by $\sum_i \alpha_i$ we may assume without loss of generality that $\sum_i \alpha_i = 1$.

First, we prove a simple bound on the change between each term $A_i(T_i^*)$ in $L$.

\begin{lem}  For any $i$ and any $k > 0$, $A_i(T_i^*) \le A_{i+k}(T_{i+k}^*) \le 2^kA_i(T_i^*)$.
\label{L_term_lem}
\end{lem}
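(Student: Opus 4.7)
The plan is to derive both inequalities from a pointwise comparison of the atomic functions $A_i(x) = \min\{x, 2^i\}$, which then lifts to any tree $T$ by summing edge costs, and finally to the optimal trees by the optimality of $T_i^*$ for $A_i$ and $T_{i+k}^*$ for $A_{i+k}$.

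For the lower bound, I would first observe that $A_i(x) \le A_{i+k}(x)$ for every $x \ge 0$, since increasing the cap from $2^i$ to $2^{i+k}$ can only raise the value of the min. Summing this inequality over the edges of $T_{i+k}^*$ (weighted by $l_e$) gives $A_i(T_{i+k}^*) \le A_{i+k}(T_{i+k}^*)$. Then optimality of $T_i^*$ under $A_i$ yields $A_i(T_i^*) \le A_i(T_{i+k}^*)$, and chaining these two inequalities gives the desired bound.

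For the upper bound, the key pointwise claim is $A_{i+k}(x) \le 2^k A_i(x)$. This is easy to check by cases on the three regions $x \le 2^i$, $2^i < x \le 2^{i+k}$, $x > 2^{i+k}$: the ratio $A_{i+k}(x)/A_i(x)$ is respectively $1$, $x/2^i \le 2^k$, and $2^{i+k}/2^i = 2^k$. Summing this over the edges of $T_i^*$ yields $A_{i+k}(T_i^*) \le 2^k A_i(T_i^*)$, and then optimality of $T_{i+k}^*$ under $A_{i+k}$ gives $A_{i+k}(T_{i+k}^*) \le A_{i+k}(T_i^*) \le 2^k A_i(T_i^*)$.

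There is no real obstacle here — the lemma reduces to the elementary pointwise comparison of the two atomic functions, with the subtlety just being that the wrong tree is used on the right-hand side of each inequality, which is handled by invoking optimality in the correct direction. I would keep the writeup to essentially three lines per inequality.
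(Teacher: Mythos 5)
Your argument is correct and is essentially identical to the paper's one-line chain $A_i(T_i^*) \le A_i(T_{i+k}^*) \le A_{i+k}(T_{i+k}^*) \le A_{i+k}(T_i^*) \le 2^kA_i(T_i^*)$, which rests on the same pointwise bound $A_i(x) \le A_{i+k}(x) \le 2^k A_i(x)$ and the same two uses of optimality. You simply spell out the case analysis for the pointwise inequality that the paper leaves implicit.
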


\begin{proof}
Note $A_i(x) \le A_{i+k}(x) \le 2^kA_i(x)$ for $k > 0$.  Therefore 
\[
A_i(T_i^*) \le A_i(T_{i+k}^*) \le A_{i+k}(T_{i+k}^*) \le A_{i+k}(T_i^*) \le 2^kA_i(T_i^*)
\]
\end{proof}

To regularize the values we run $\valpha$ through a series of three procedures, one for each of the following lemmas, 
each of which changes $\valpha$ to satisfy an additional set of constraints.  None of the procedures are conceptually difficult, but the details are quite intricate.  We will state the lemmas, give a brief sketch of the ideas, and present the complete proofs in the appendix.

The first lemma is only a helper used in satisfying the $\sigma$ constraints.  The proof serves as a warmup for the later lemmas, which use similar ideas but are more involved.

\begin{lem} 
\label{max_capacity_lem}
Given arbitrary $\valpha$, we can find $\valpha'$ such that the corresponding $f'$,$L'$, $\delta'$, $\sigma'$ satisfy $f(x) \le f'(x)$, $L' \le 2L$, and $\frac{\sigma'_{K-1}}{\delta'_{K-1}} \le D$, where $K$ is the number of pipes, and $D$ is the total demand rounded up to a power of 2.
\end{lem}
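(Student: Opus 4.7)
The plan is to leave all $\alpha_i$ for $i \neq p(K-1)$ unchanged and simply increase $\alpha_{p(K-1)}$ until the capacity constraint is satisfied. The key structural observation from Lemma~\ref{alphatodelta_lem} is that $\sigma_{K-1} = \sum_{j<K-1} \alpha_{p(j)} 2^{p(j)}$ depends only on the ``earlier'' coefficients, whereas $\delta_{K-1} = \alpha_{p(K-1)}$. So by raising only $\alpha_{p(K-1)}$ we can shrink $u_{K-1} = \sigma_{K-1}/\delta_{K-1}$ without altering the numerator and without introducing or removing any pipes (so $K$ is preserved). Concretely, if $u_{K-1} \le D$ already, take $\valpha' = \valpha$; otherwise set $\alpha'_{p(K-1)} = \sigma_{K-1}/D$, giving new parameters $\sigma'_{K-1} = \sigma_{K-1}$ and $\delta'_{K-1} = \sigma_{K-1}/D$, so $u'_{K-1} = D$. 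All $\sigma'_k = \sigma_k$ are unchanged, and every $\delta'_k$ increases by the same amount $\alpha'_{p(K-1)} - \alpha_{p(K-1)}$, so the monotonicity of the $\delta_k$'s and $\sigma_k$'s is preserved and the output is a valid pipe set.

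The inequality $f(x) \le f'(x)$ is immediate: $f'(x) - f(x) = (\alpha'_{p(K-1)} - \alpha_{p(K-1)}) \, A_{p(K-1)}(x) \ge 0$ since we only increased a nonnegative coefficient multiplying a nonnegative function. The work is in verifying $L' \le 2L$.

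For the $L$ bound, the increase satisfies
\[
L' - L = (\alpha'_{p(K-1)} - \alpha_{p(K-1)}) A_{p(K-1)}(T^*_{p(K-1)}) \;\le\; \frac{\sigma_{K-1}}{D} A_{p(K-1)}(T^*_{p(K-1)}).
\]
Expanding $\sigma_{K-1}$ term by term and applying Lemma~\ref{L_term_lem} to each $j < K-1$, which gives $A_{p(K-1)}(T^*_{p(K-1)}) \le 2^{p(K-1) - p(j)} A_{p(j)}(T^*_{p(j)})$, each summand $\frac{\alpha_{p(j)} 2^{p(j)}}{D} A_{p(K-1)}(T^*_{p(K-1)})$ is bounded by $\frac{2^{p(K-1)}}{D} \alpha_{p(j)} A_{p(j)}(T^*_{p(j)})$. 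Summing yields
\[
L' - L \;\le\; \frac{2^{p(K-1)}}{D} \sum_{j < K-1} \alpha_{p(j)} A_{p(j)}(T^*_{p(j)}) \;\le\; \frac{2^{p(K-1)}}{D} L \;\le\; L,
\]
where the final inequality uses that breakpoints live at powers of $2$ no greater than $D$, so $2^{p(K-1)} \le D$.

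The only subtlety — and the step where the argument could easily break if mishandled — is this last chain of inequalities. The telescoping between the $2^{p(j)}$ factor inside $\sigma_{K-1}$ and the $2^{p(K-1) - p(j)}$ blowup from Lemma~\ref{L_term_lem} is what cancels the gap, leaving only the harmless ratio $2^{p(K-1)}/D \le 1$. This is precisely the kind of bookkeeping the authors promise recurs in the subsequent (harder) regularization lemmas, which is why this lemma serves as a warmup.
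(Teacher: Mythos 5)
Your proof is correct, and it takes a genuinely different and cleaner route than the paper. The paper locates the first pipe $k$ whose capacity $\sigma_k/\delta_k$ exceeds $D$, removes every pipe above it, and then ``rotates'' pipe $k$ around its indifference point $(2^{p(k-1)}, \sigma_k + \delta_k 2^{p(k-1)})$ with pipe $k-1$ until the capacity constraint holds; the $L$-bound then requires tracking how the removed $\alpha$-mass is reassigned. Your approach instead exploits that $u_k = \sigma_k/\delta_k$ is increasing in $k$, so only $u_{K-1}$ matters, and that $\sigma_{K-1} = \sum_{j<K-1}\alpha_{p(j)}2^{p(j)}$ is independent of $\alpha_{p(K-1)}$ while $\delta_{K-1} = \alpha_{p(K-1)}$, so a single coefficient bump suffices. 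This has two pleasant side effects the paper's argument has to work for: every indifference point $g_k$ is preserved (the change cancels in $\delta_k - \delta_{k+1}$ for $k < K-1$, and $g_{K-1} = 2^{p(K-1)}$ directly), so the pipe structure and the form $f' = \sum_i \alpha'_i A_i$ are automatically preserved, and no pipes are deleted, so there is no re-indexing. Your $L$-bound is the same telescoping seen in the paper's proof --- the $2^{p(j)}$ inside $\sigma_{K-1}$ cancels the $2^{p(K-1)-p(j)}$ blowup from Lemma~\ref{L_term_lem}, leaving $2^{p(K-1)}/D \le 1$ --- just applied to a simpler perturbation. Both the direction of the change (only when $u_{K-1} > D$, so $\alpha_{p(K-1)} < \sigma_{K-1}/D$ and you are genuinely increasing the coefficient, giving $f \le f'$) and the monotonicity of $\sigma'_k,\delta'_k$ check out. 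What your argument buys is that it avoids the rotation machinery entirely, which the later regularization lemmas do need; what it gives up is that it does not serve as a warm-up for those more delicate lemmas, which is part of what the paper's version was designed to do.
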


The following 2 lemmas perform the actual regularization.

\begin{lem} 
\label{delta_lem}
Given $\valpha$ satisfying $\frac{\sigma_{K-1}}{\delta_{K-1}} \le D$, we can find $\valpha'$ 
such that the corresponding $f'$,$L'$, $\delta'$, $\sigma'$ satisfy
$f(x) \le 3f'(x)$, $L' = O(L)$, $\frac{\sigma'_{K-1}}{\delta'_{K-1}} \le D$, and $\delta'_{k+1} < \gamma\delta'_{k}$ for all $k$.
\end{lem}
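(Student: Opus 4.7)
The plan is to partition the original pipes into buckets and collapse each bucket to a single new pipe. Greedily define bucket boundaries $k_0 = 0 < k_1 < \cdots < k_{K'}$ by setting $k_{j+1}$ to the smallest index with $\delta_{k_{j+1}} < \gamma\delta_{k_j}$. Within each bucket $B_j = [k_j, k_{j+1})$ all slopes $\delta_k$ lie in $(\gamma\delta_{k_j}, \delta_{k_j}]$, and by construction the representatives $\delta_{k_j}$ satisfy $\delta_{k_{j+1}} < \gamma\delta_{k_j}$ as required. Setting $\delta'_j = \delta_{k_j}$ for the new $K'$ pipes gives the required $\gamma$-regularity; since these pipes form a subset of the original and the capacities $u_k = \sigma_k/\delta_k$ are non-decreasing in $k$, we will get $u'_{K'-1} \le u_{K-1} \le D$ once we verify that rounding does not inflate $\sigma'$ past $\sigma_{k_j}$.

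Dropping pipes from the SSBaB min representation can only raise the cost function, so $f^\dagger(x) := \min_j\{\sigma_{k_j}+\delta_{k_j}x\} \ge f(x)$ pointwise. To realize $f^\dagger$ as an $\valpha'$ representation we need breakpoints at powers of 2, so round each new indifference point $g'_j = (\sigma_{k_{j+1}}-\sigma_{k_j})/(\delta_{k_j}-\delta_{k_{j+1}})$ \emph{down} to $2^{p'_j} = 2^{\lfloor\log g'_j\rfloor}$ and set $\alpha'_{p'_j} = \Delta_j := \delta_{k_j}-\delta_{k_{j+1}}$. By Lemma~\ref{alphatodelta_lem}, the induced offsets are $\sigma'_j = \sum_{i<j}\Delta_i 2^{p'_i} \le \sum_{i<j}\Delta_i g'_i = \sigma_{k_j}$, confirming $\sigma'_{K'-1}/\delta'_{K'-1} \le D$. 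Since moreover $g'_j \le 2\cdot 2^{p'_j}$, we also get $\sigma_{k_j} \le 2\sigma'_j$, and taking the minimum over $j$ yields $f^\dagger(x) \le 2f'(x)$, so $f \le f^\dagger \le 2f' \le 3f'$.

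For $L' = O(L)$ the key identity is a local application of Lemma~\ref{alphatodelta_lem}: $g'_j\Delta_j = \sigma_{k_{j+1}}-\sigma_{k_j} = \sum_{k\in B_j}\alpha_{p(k)}2^{p(k)}$, so $g'_j$ is the $\alpha$-weighted mean of the bucket positions $2^{p(k)}$. Let $\psi(p) := A_p(T^*_p)/2^p$; by Lemma~\ref{L_term_lem} $\psi$ is non-increasing, and from $A_p(T^*_p) \ge A_{p-1}(T^*_{p-1})$ we also get $\psi(p) \ge \psi(p-1)/2$. Writing $L_j = \sum_{k\in B_j}\alpha_{p(k)}2^{p(k)}\psi(p(k))$ and $L'_j = \Delta_j 2^{p'_j}\psi(p'_j)$, split the bucket sum at $p'_j$: for $p(k) \le p'_j$ use $\psi(p(k)) \ge \psi(p'_j)$, and for $p(k) > p'_j$ use $\psi(p(k)) \ge \psi(p'_j)/2^{p(k)-p'_j}$ combined with $2^{p(k)} \ge 2\cdot 2^{p'_j}$. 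The weighted-mean identity then forces $L_j = \Omega(L'_j)$, and summing over buckets gives $L' = O(L)$.

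The main obstacle is obtaining a clean constant in the $L$-bound. The tension is that $A_{p'_j}(T^*_{p'_j})$ can be exponentially larger than individual $A_{p(k)}(T^*_{p(k)})$ for $p(k) \ll p'_j$, but the identity $g'_j\Delta_j = \sum\alpha_{p(k)}2^{p(k)}$ together with the bucket lower bound $\Delta_j \ge (1-\gamma)\delta_{k_j}$ forces the $\alpha$-mass at large $p$ to be a significant fraction of $\Delta_j$---exactly what is needed to pay for the $A_{p'_j}(T^*_{p'_j})$ factor in $L'_j$. Tracking the rounding slack, Lemma~\ref{L_term_lem}'s factor-2 growth, and the $(1-\gamma)$ constants through this calculation is the technical core.
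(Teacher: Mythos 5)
Your bucketing/rounding scheme is a genuinely different (and arguably cleaner) route than the paper's iterative delete-and-rotate: the paper repeatedly deletes a run of pipes $k+1,\ldots,k+l-1$ whenever $\delta_{k+1}\ge\gamma\delta_k$ (with an extra margin, deleting until $\delta_{k+l}<\frac{\gamma}{3}\delta_k$), then rotates pipe $k$ about its left breakpoint to push the new indifference point up to a power of $2$; you instead do a one-pass greedy bucketing, keep the bucket heads, round each new indifference point down, and recover $\sigma'$ from the $\alpha$-representation. Your checks of $\delta'$-regularity, of $f\le 3f'$ via $\sigma_{k_j}\le 2\sigma'_j$, and of the capacity bound all go through, and the weighted-mean identity $g'_j\Delta_j=\sum_{k\in B_j}\alpha_{p(k)}2^{p(k)}$ is correct and a nice observation.

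However, the step ``the weighted-mean identity then forces $L_j=\Omega(L'_j)$'' is a genuine gap, and the heuristic that ``the $\alpha$-mass at large $p$ must be a significant fraction of $\Delta_j$'' is false. Tracing your split: for $p(k)>p'_j$ you get $\alpha_{p(k)}A_{p(k)}(T^*_{p(k)})\ge\alpha_{p(k)}A_{p'_j}(T^*_{p'_j})$ and for $p(k)\le p'_j$ you get $\alpha_{p(k)}A_{p(k)}(T^*_{p(k)})\ge\alpha_{p(k)}2^{p(k)-p'_j}A_{p'_j}(T^*_{p'_j})$, so you need $\sum_{p(k)>p'_j}\alpha_{p(k)}+\sum_{p(k)\le p'_j}\alpha_{p(k)}2^{p(k)-p'_j}\ge c\Delta_j$. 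But the mean identity only gives $\sum_k\alpha_{p(k)}2^{p(k)-p'_j}\ge\Delta_j$, in which the terms with $p(k)>p'_j$ are weighted by $2^{p(k)-p'_j}\gg 1$, not by $1$ as you need. A tiny amount of $\alpha$-mass sitting at a very large index $p(k)$ can dominate the weighted sum (pushing $g'_j$, hence $p'_j$, far to the right) while contributing almost nothing to your lower bound. Concretely, take a bucket with two breakpoints, one at $p=0$ with $\alpha\approx 1-\gamma$ and one at $p=100$ with $\alpha=(1-\gamma)2^{-50}$, with $\delta_{k_j}=1$, $\delta_{k_j+1}=\gamma+\epsilon$, $\delta_{k_{j+1}}=\gamma-\epsilon'$; then $g'_j\approx 2^{50}$ so $p'_j=50$, and with $\psi(0)=\psi(50)=1$, $\psi(100)=2^{-50}$ (consistent with Lemma~\ref{L_term_lem}), one finds $L_j\approx 2(1-\gamma)$ but $L'_j\approx(1-\gamma)2^{50}$, a $2^{49}$ blowup. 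This is exactly the regime the paper's Case~2 charging is designed to handle: when the right end of a bucket still carries a non-negligible slope, the increase is charged to the \emph{next} bucket's $L$-contribution, not the current one. Your per-bucket bound cannot be made to work without some such cross-bucket charging, so this part of the proposal needs to be replaced.
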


\begin{lem} 
\label{sigmalem}
Given $\valpha$ satisfying $\frac{\sigma_{K-1}}{\delta_{K-1}} \le D$ and $\delta_{k+1} < \gamma\delta_{k}$,
we can find $\valpha'$ such that such the corresponding $f'$,$L'$, $\delta'$, $\sigma'$ satisfy
$f(x) \le \frac{5}{2}f'(x)$, $L'  = O(L)$, $\delta'_{k+1} < \gamma\delta'_{k}$, and $\sigma_k' < \gamma\sigma_{k+1}'$ for all $k$.
\end{lem}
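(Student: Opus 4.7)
My strategy is to select a well-spaced subsequence of the pipes greedily, in the order of increasing index. Starting with pipe $0$, I walk through the remaining pipes in order and keep pipe $k$ whenever $\sigma_k$ exceeds the most recently kept $\sigma$ by a factor sufficiently larger than $1/\gamma$ to absorb the factor-$2$ loss introduced by the rounding step below; call the kept indices $0 = i_0 < i_1 < \cdots < i_m = K$. Because the kept $\delta_{i_j}$'s form a subsequence of a $\gamma$-regular geometric sequence, $\delta'_{j+1} < \gamma\delta'_j$ is immediate, and the target $\sigma$-regularity will hold by construction once the rounding is accounted for.

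Consider the restricted pipe system $\tilde f(x) = \min_j\{\sigma_{i_j}+\delta_{i_j}x\}$. Removing pipes from a min can only make the min larger, so $\tilde f(x) \ge f(x)$ pointwise. To produce a valid $\valpha'$ I would apply Lemma \ref{deltatoalpha_lem}, which requires indifference points at powers of $2$. For each $j$, I would round the true indifference point $g'_j = (\sigma_{i_{j+1}}-\sigma_{i_j})/(\delta_{i_j}-\delta_{i_{j+1}})$ to the nearest power of $2$, call it $2^{q_j}$, and then adjust $\sigma_{i_{j+1}}$ so that pipes $i_j$ and $i_{j+1}$ cross there. Since $2^{q_j}/g'_j \in [1/2, 2]$, each kept $\sigma$ is perturbed by at most a factor of $2$, which in turn affects $\tilde f$ by at most a factor of $2$; combined with $\tilde f \ge f$ this yields the desired $f \le \frac{5}{2} f'$.

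The main obstacle is the bound $L' = O(L)$. Lemma \ref{deltatoalpha_lem} gives $\alpha'_{q_j} = \delta_{i_j} - \delta_{i_{j+1}} = \sum_{k=i_j}^{i_{j+1}-1}\alpha_{p(k)}$, so each new weight is the collapse of the removed ones. One can check that $q_j$ lies between $p(i_j)$ and $p(i_{j+1}-1)$, after which Lemma \ref{L_term_lem} bounds $A_{q_j}(T^*_{q_j})$ in terms of the $A_{p(k)}(T^*_{p(k)})$'s from the cluster. The crucial structural input is that within a cluster the $\sigma$-values are trapped in a window of multiplicative size $O(1/\gamma)$ while the $\delta$-values shrink geometrically by factor $\gamma$ per step, which limits both the cluster length and the spread of its breakpoints; the hypothesis $\sigma_{K-1}/\delta_{K-1} \le D$ ensures the largest breakpoint stays inside the demand range $[1,D]$, so rounding remains valid. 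Working through this cluster-by-cluster accounting, paralleling and likely reusing the techniques from the proof of Lemma \ref{delta_lem}, is the intricate heart of the argument.
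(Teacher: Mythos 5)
Your high-level plan — thin the pipe sequence so that $\sigma$'s grow geometrically, then nudge crossing points to powers of two and invoke Lemma~\ref{deltatoalpha_lem} — is structurally the same as the paper's, although you work forwards (keeping $\delta$'s fixed and moving each $\sigma_{i_{j+1}}$) while the paper works backwards from the last pipe and rotates $(\sigma_k,\delta_k)$ around the upper breakpoint $2^{p(k)}$, decreasing $\sigma_k$ and slightly increasing $\delta_k$. Your forwards/$\sigma$-only variant is arguably cleaner for the $\delta$-regularity check (the kept $\delta$'s are untouched), and your observation that $\sigma'_{i_j}=\sum_{m<j}(\delta_{i_m}-\delta_{i_{m+1}})2^{q_m}$ keeps the rounding perturbations from compounding is correct and gives $f\le 2f'$.

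The gap is the $L'=O(L)$ bound, which you explicitly flag as the heart of the argument but do not actually prove, and the structural intuition you do offer is wrong. You claim that within a cluster "the $\sigma$-values are trapped in a window of multiplicative size $O(1/\gamma)$ while the $\delta$-values shrink geometrically, which limits both the cluster length and the spread of its breakpoints." Neither is limited. Take $\sigma_{i_j}=1$, $\sigma_{i_j+m}=1+m\epsilon$ for tiny $\epsilon$, and $\delta_{i_j+m}\approx\gamma^m\delta_{i_j}$: a cluster can contain arbitrarily many pipes, and its breakpoints $g_{i_j+m}\approx \epsilon/((1-\gamma)\gamma^m\delta_{i_j})$ grow geometrically, so their spread is $\Theta((1/\gamma)^{\text{cluster size}})$, bounded only by $D$. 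Consequently, if all the cluster's $\alpha$-mass (which is $\approx\alpha_{p(i_j)}$, since $\alpha_{p(k)}=\delta_k-\delta_{k+1}$ decays geometrically) were pushed up to a breakpoint $q_j$ near $p(i_{j+1}-1)$, Lemma~\ref{L_term_lem} would only give $\alpha'_{q_j}A_{q_j}(T^*_{q_j})\lesssim(1/\gamma)^{\text{cluster size}}L_j$, which is not $O(1)$. Making the charging work requires the paper's two cases: either the first cluster $\sigma$ is much smaller than the next kept $\sigma$, in which case one can show $2^{q_j}$ actually stays within an $O(1)$-factor of $2^{p(i_j)}$ (a "distance covered" argument using the $\sigma$-gap), or it is not, in which case $\alpha'_{q_j}$ can be bounded via $\sigma_{i_j}/2^{q_j}$ and charged to the adjacent chunk $L_{j-1}$ using $\sigma_{s}-\sigma_{s'}=\sum\alpha_{p(i)}2^{p(i)}$ together with Lemma~\ref{L_term_lem}. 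This case split, and the bookkeeping that each chunk $L_j$ is charged $O(1)$ times, is the nontrivial content that your proposal elides; saying it parallels Lemma~\ref{delta_lem} is not enough, because Lemma~\ref{delta_lem}'s charging works against $\delta$-ratios while here one needs a genuinely different argument based on $\sigma$-differences and the breakpoint formula.
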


The proofs are based around the following idea: check if $\delta_{k+1} \ge \gamma\delta_k$ or $\sigma_k \ge \gamma\sigma_{k+1}$, and discard pipes that violate the constraints.  The additional difficulty, relative to the analysis of GMM, arises from the special form that $f$ must satisfy and the need to bound the increase in $L$.  When we remove pipes in general the indifference points between subsequent pipes will no longer be powers of 2, so $f$ can no longer be defined in terms of $\valpha$.  We fix this by modifying the parameters of an offending pipe until the new breakpoint is a power of $2$.  To avoid drastic changes in $L$ or $f$, we achieve this by holding the cost of the given pipe $k$ fixed at its indifference point with either $k-1$ of $k+1$ and ``rotating'' the line $\sigma_k + \delta_kx$ around this fixed point until the other indifference point is fixed.

Analyzing the increase in $L$ caused by these procedures is the technical crux in the regularization analysis, as removing pipes can shift ``$\alpha$-mass'' in the multi-level cost onto much more expensive trees.  
We consider each pipe removal and the terms in $L$ it affects.  If $\alpha$-mass is shifted from $A_i(T_i^*)$ to $A_{i+l}(T_{i+l}^*)$, where $l = O(1)$, then the current chunk of $L$ has increased by $O(1)$.  If not, we show that the conditions requiring $l = \omega(1)$ imply there exist large terms in $L$ above $i+l$ that can absorb the increase with only an $O(1)$-factor loss.
We only charge against each $L$-term $O(1)$ times during the entire regularization, so the total increase is bounded by $O(1)$.

We summarize the consequences of the regularization procedure below:

\begin{thm} The algorithm $\mcA$ required by Theorem \ref{alg_thm} exists for a constant $c$, and the oblivious approximation ratio $R_1$ is constant.
\end{thm}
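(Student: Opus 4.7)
The plan is to chain the three regularization lemmas with Theorem \ref{gmm_thm} to produce an algorithm $\mcA$ meeting the hypothesis of Theorem \ref{alg_thm}. Given an arbitrary input $\valpha$ with associated $f$ and $L$, I would first normalize so that $\sum_i \alpha_i = 1$, and then preprocess the vector in three successive passes. First, apply Lemma \ref{max_capacity_lem} to obtain $\valpha^{(1)}$ with $\sigma^{(1)}_{K-1}/\delta^{(1)}_{K-1} \le D$, losing at most a factor of $2$ in $L$ and only increasing $f$ pointwise. Next, feed $\valpha^{(1)}$ into Lemma \ref{delta_lem} to obtain $\valpha^{(2)}$ satisfying the $\delta$-regularity $\delta^{(2)}_{k+1} < \gamma\delta^{(2)}_k$ while preserving the capacity bound, again losing only constant factors. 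Finally, apply Lemma \ref{sigmalem} to obtain a fully $\gamma$-regular $\valpha^{(3)}$ with $f^{(2)}(x) \le \tfrac{5}{2}f^{(3)}(x)$ and $L^{(3)} = O(L^{(2)})$.

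Having produced $\gamma$-regular $\valpha^{(3)}$, I would then run the GMM algorithm on the pipes determined by $\valpha^{(3)}$ via Lemma \ref{alphatodelta_lem}; Theorem \ref{gmm_thm} gives a tree $T_{GMM}$ with $\E[f^{(3)}(T_{GMM})] = O(L^{(3)})$. Chaining the three $f$-inequalities yields $f(T) \le f^{(1)}(T) \le 3f^{(2)}(T) \le \tfrac{15}{2}f^{(3)}(T)$ for every tree $T$, and chaining the three $L$-inequalities gives $L^{(3)} = O(L)$. Combining,
\[
\E[f(T_{GMM})] \le \tfrac{15}{2}\E[f^{(3)}(T_{GMM})] = O(L^{(3)}) = O(L) = O\Big(\sum_i \alpha_i A_i(T_i^*)\Big),
\]
which is exactly the inequality required of $\mcA$ in Theorem \ref{alg_thm} with some absolute constant $c$. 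Since each regularization step is a deterministic polynomial-time transformation of $\valpha$, and the GMM algorithm is itself polynomial, $\mcA$ runs in polynomial time.

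The second conclusion is then immediate: plug this $\mcA$ and its constant $c$ into Theorem \ref{alg_thm}. That theorem constructs, in polynomial time with high probability, a distribution over $1+\log D$ trees achieving expected oblivious approximation ratio at most $2c\pi_{RoB}$, which is an absolute constant. Hence $R_1 = O(1)$.

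The only genuinely subtle point, which is why the proof is essentially a one-line composition, is ensuring that the direction of the inequalities in the three regularization lemmas is consistent: the cost function $f$ is allowed to \emph{grow} under regularization (bounding $f(T_{GMM})$ above by $O(f'(T_{GMM}))$ is what we need, since the GMM guarantee is stated for $f'$), while the benchmark $L$ is allowed to \emph{shrink}, but we need the reverse bound $L' = O(L)$ so that the GMM guarantee $O(L')$ translates into $O(L)$. Inspecting the three lemma statements confirms that both directions are indeed the ones provided, so the chaining goes through cleanly and no additional argument is required.
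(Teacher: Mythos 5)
Your composition of the three regularization lemmas with Theorem~\ref{gmm_thm}, followed by a plug into Theorem~\ref{alg_thm}, is exactly the argument the paper gives (it is sketched at the start of Section~\ref{regularization_section} and then stated as a summary theorem without a formal proof block). The directions of the $f$- and $L$-inequalities chain as you say, the factor $\tfrac{15}{2}$ is correct, and each lemma's preconditions are satisfied by the output of the previous pass, so the proof is correct and follows the paper's approach.
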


\section{Open Problems}

A number of interesting open problems remain to be solved.  First, we have only achieved an $O(1)$-ratio for the objective $R_1 = \max_{f} \E [f(T)]/f(T_f^*)$, but Goel and Estrin \cite{goel2003soc} have shown an $O(\log |\mathcal{D}|)$-approximation for the much harder objective $R_2 = \E \left[\max_{f} f(T)/f(T_f^*)\right]$, proving there exists a single tree that is \emph{simultaneously} an $O(\log |\mathcal{D}|)$-approximation for all $f \in \mathcal{F}$.  Achieving a constant for this stronger objective or showing a lower bound remains an important open question.  

Second, although our algorithm proves that an $O(1)$-approximate distribution exists, the ellipsoid algorithm tells us little about what these trees actually look like.  A combinatorial algorithm that yields insight as to the actual structure of these trees would also be of interest.  Third, we have made little attempt to optimize the constant $c$ in the approximation ratio, and the resulting value is huge due to the regularization procedure.
Shaving large factors off our bound on $R_1$ may be a simple question, and it would be particularly interesting to find an oblivious approximation algorithm that is competitive with standard SSBaB for known $f$.

\pdfbookmark[1]{\refname}{My\refname}
\bibliographystyle{alpha}
\bibliography{ssbab}

\appendix

\section{Proofs of regularization lemmas}
\label{regularization_appendix}

\newtheorem*{thm_max_capacity}{Lemma \ref{max_capacity_lem}}

\begin{thm_max_capacity} 
Given arbitrary $\valpha$, we can find $\valpha'$ such that the corresponding $f'$,$L'$, $\delta'$, $\sigma'$ satisfy $f(x) \le f'(x)$, $L' \le 2L$, and $\frac{\sigma'_{K-1}}{\delta'_{K-1}} \le D$, where $K$ is the number of pipes, and $D$ is the total demand rounded up to a power of 2.
\end{thm_max_capacity}

\begin{proof}
Let $k$ be the first pipe such that $\frac{\sigma_k}{\delta_k} \ge D$.  Note $k >0$ since $\frac{\sigma_0}{\delta_0} = 0$.  Remove all pipes above $k$.  
Now we modify the parameters of pipe $k$ to satisfy the desired constraint.
Increase $\delta_k$, while decreasing $\sigma_k$ so as to hold $\sigma_k + \delta_k2^{p(k-1)}$ fixed, until $\frac{\sigma_k}{\delta_k} = D$.  Geometrically, we are rotating the line $y = \sigma_k + \delta_kx$ counter-clockwise around the point $(2^{p(k-1)},\sigma_k + \delta_k2^{p(k-1)})$.  Let $\delta_k'$, $\sigma_k'$ be the new parameters for pipe $k$.  Let $f'$ be the new cost function formed by modifying pipe $k$ and removing pipes $k+1, \ldots, K-1$ and $L'$ the associated multi-level cost.

\begin{description}
\item{\emph{Claim:}}
The function $f'(x)$ is concave, and $f(x) \le f'(x)$ for all $x$.

Initially $\delta_k < \delta_{k-1}$ and $\sigma_k > \sigma_{k-1}$, and we continuously decrease $\sigma_k$ while increasing $\delta_k$.  We know $\sigma_{k-1} + \delta_{k-1}2^{p(k-1)} = \sigma_k' + \delta_k'2^{p(k-1)}$, so if we decrease $\sigma_k'$ to $\sigma_{k-1}$ the modified pipe $k$ will match pipe $k-1$.  However, we have that $\frac{\sigma_{k-1}}{\delta_{k-1}} < D = \frac{\sigma_k'}{\delta_k'}$, so we stop before reaching that point.  Therefore $\sigma_k' > \sigma_{k-1}$ and $\delta_k' < \delta_{k-1}$, which implies $f'(x)$ is concave since the switchover between pipes $k-1$ and $k$ is unchanged.
We only increased the rate of growth for $x \ge 2^{p(k-1)}$, so $f'(x) \ge f(x)$ for all $x$.

\item{\emph{Claim:}} The new multi-level cost $L'$ is at most $2L$.

There is a term $\alpha_{p(j)}$ for each changeover between pipes as well as the implicit breakpoint at $D$ when $f$ levels off.
Increasing $\delta_k$ and removing pipes $k+1, \ldots, K-1$ so that pipe $k$ is used all the way to $D$ corresponds in $L$ to pushing $\alpha$-mass from the terms $\alpha_{p(k-1)}A_{p(k-1)}(T_{p(k-1)}^*) + \cdots + \alpha_{p(K-1)}A_{p(K-1)}(T_{p(K-1)}^*)$ onto the term $\delta_k'A_{\log D}(T_{\log D}^*)$ because $p'(k) = \log D$.  

By the definition of $\sigma_k'$ and $\delta_k'$ and Lemma \ref{alphatodelta_lem} we have 
\[
\delta_k' = \frac{\sigma_k'}{D} = \sum_{j<k} \alpha_{p'(j)}'\frac{2^{p'(j)}}{D}
\]
The terms $\alpha_{p(0)}, \ldots \alpha_{p(k-2)}$ are unchanged, and $\alpha_{p(k-1)}$ drops due the decreased difference between $\delta_{k-1}$ and $\delta_k$.  There are no non-zero $\alpha_i'$ between $p(k-1)$ and $\log D$.  This gives us 
\[
\delta_k' = \sum_{j<k} \alpha_{p'(j)}'\frac{2^{p'(j)}}{D} \le \sum_{j<k} \alpha_{p(j)}\frac{2^{p(j)}}{D}
\]

Next we use Lemma \ref{L_term_lem} to relate $\frac{2^{p(j)}}{D}A_{p(j)}(T_{p(j)}^*)$ and $A_{\log D}(T_{\log D}^*)$:
\[
\delta_k'A_{\log D}(T_{\log D}^*) \le \sum_{j<k} \alpha_{p(j)}\frac{2^{p(j)}}{D}A_{\log D}(T_{\log D}^*)
\le  \sum_{j<k} \alpha_{p(j)}A_{p(j)}(T_{p(j)}^*) \le L
\]
Finally, $L' = \sum_{j<k} \alpha_{p(j)}'A_{p(j)}(T_{p(j)}^*) + \delta_k'A_{\log D}(T_{\log D}^*) \le 2L$.
\end{description}
\end{proof}

\newtheorem*{thm_delta}{Lemma \ref{delta_lem}}

\begin{thm_delta} 
Given $\valpha$ satisfying $\frac{\sigma_{K-1}}{\delta_{K-1}} \le D$, we can find $\valpha'$ 
such that the corresponding $f'$,$L'$, $\delta'$, $\sigma'$ satisfy
$f(x) \le 3f'(x)$, $L' = O(L)$, $\frac{\sigma'_{K-1}}{\delta'_{K-1}} \le D$, and $\delta'_{k+1} < \gamma\delta'_{k}$ for all $k$.
\end{thm_delta}

\begin{proof}
We repeat the following two steps until $\delta_{k+1} < \gamma\delta_k$ for all $k$.
\begin{description}
\item{1. \emph{Deletion Step:}} 
The basic idea here is the same as that used by GMM Lemma 3.2 \cite{guha2001cfa} to satisfy the constraints on the $\delta$'s: whenever a pipe violates the constraint $\delta_{k+1} \ge \gamma\delta_k$, we remove the pipe.

Let $k$ be the smallest index such that $\delta_{k+1} \ge \gamma\delta_k$, and let $l$ be the smallest 
integer such that $\delta_{k+l} < \frac{\gamma}{3}\delta_k$.  If such an $l$ exists, then remove 
pipes $k+1,\ldots,k+l-1$, and change $f(x)$ in the interval $[2^{p(k)},2^{p(k+l-1)}]$ by using the cheaper of pipe $k$ and 
$k+l$.  If no such $l$ exists then remove all pipes above $k$, and replace them with pipe $k$.
Note that this does not break the condition set in Lemma \ref{max_capacity_lem}.

\item{2. \emph{Rotation Step:}}
Pipes $k$ and $k+l$ now have equal cost at some point $g$, but $g$ may not be a power of $2$, in which case $f(x)$ is no longer in the form $\sum_i \alpha_iA_i(x)$, and $\valpha'$ is no longer defined.  

We want to modify the pipes to change $g$ while not affecting $L$ or $f$ too much.
As in Lemma \ref{max_capacity_lem}, we hold the cost of pipe $k$ fixed when routing $2^{p(k-1)}$ flow (where we switch from $k-1$ to $k$),
and reduce $\delta_k$ until pipes $k$ and $k+l$ meet at the next power of 2, increasing $\sigma_k$ to maintain $k$'s cost at $2^{p(k-1)}$.
This corresponds to rotating the line $y= \sigma_k + \delta_kx$ clockwise around the point $(2^{p(k-1)},\sigma_k + \delta_k2^{p(k-1)})$.  Let $\delta_k'$ and $\sigma_k'$ be the new parameters for pipe $k$.  
Note that $f'(x)$ now has the proper structure again, and $\valpha'$ and $L'$ are well-defined. We never increase $\sigma_0$ above $0$ since we hold this point fixed when adjusting pipe $0$.
\end{description}

\begin{figure}[htbp] 
 \centering
 \includegraphics[width=4in]{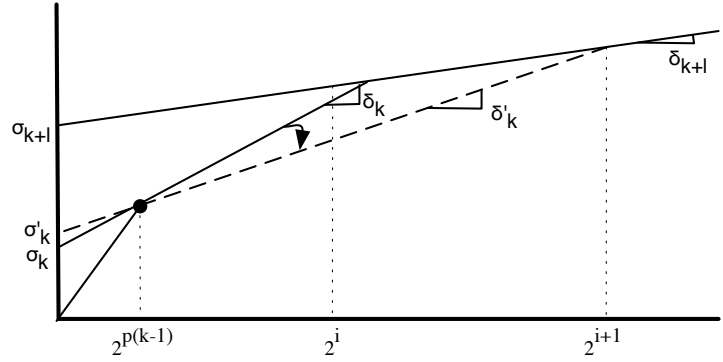} 
 \caption{To ensure the indifference point between pipes $k$ and $k+l$ is a power of 2 we ``rotate'' pipe $k$ around it's starting point until it meets $k+l$ at a power of 2.   }
 \label{rotation_fig}
\end{figure}

First, we bound the change to $\delta_k$ in the rotation step.
This allows us to prove that the constraints on the $\delta$'s are satisfied, and $f(x)$ decreases by at most an $O(1)$-factor.

\begin{description}

\item{\emph{Claim:}}
After rotation $\delta_k' \ge \frac{\delta_k}{3}$.

Before adjustment, we are indifferent between $k$ and $k+l$ at $(g,y_k)$ 
where $y_k = \sigma_k + \delta_kg = \sigma_{k+l} + \delta_{k+l}g$. 
The difference in costs between $k$ and $k+l$ at $2^{p(k-1)}$ flow remains unchanged  
because we hold the cost of pipe $k$ fixed at $2^{p(k-1)}$.
Let $x_k = g - 2^{p(k-1)}$, the distance after $2^{p(k-1)}$ at which their costs are equal.
Before rotation, the pipes' costs approach each other at a rate of $\delta_k - \delta_{k+l}$.
If we reduce $\delta_k$ by a factor of $3$, then $\frac{\delta_k}{3} - \delta_{k+l} \le \frac{1}{3}(\delta_k - \delta_{k+l})$, 
so it takes at least $3x_k$ for pipe k to grow from $\sigma_k+\delta_k2^{p(k-1)}$ to $y_k$, during which pipe $k+l$'s cost only increases, so pipe $k$ does not surpass $k+l$ until after $2^{p(k-1)} + 3x_k$.

The original pipe $k$ met pipe $k+1$ (now removed) at some point $2^{p(k)} \ge 2^{p(k-1)+1}$ before meeting $k+l$ at $g$.
Therefore $g \ge 2^{p(k-1)+1}$, which implies $x_k = g - 2^{p(k-1)} \ge \frac{g}{2}$.
After reducing $\delta_k$ to $\frac{\delta_k}{3}$, pipes $k$ and $k+l$ now meet after $2^{p(k-1)} + 3x_k = g + 2x_k \ge 2g$.
There must be a power of $2$ between $g$ and $2g$, and we reduce $\delta_k$ only until we hit the next power of $2$, so $\delta_k' \ge \frac{\delta_k}{3}$.

\item{\emph{Claim:}} When the procedure is finished $\delta_{k+1}' < \gamma\delta_k'$ for all $k$.

By the choice of $l$, $\delta_{k+l} < \frac{\gamma}{3}\delta_k \le \gamma\delta_k'$, using the previous claim.  
Further $\delta_k' < \delta_k < \gamma\delta_{k-1}$, so no previously-satisfied constraints are broken.  We renumber the pipes, and repeat the process for the next constraint violation.  When we are done, all the remaining pipes will satisfy $\delta_{k+1}' < \gamma\delta_k'$.

\item{\emph{Claim:}} For all $x$, $f(x) \le 3f'(x)$.

Note that removing pipes $k+1,\ldots,k+l-1$ only changes $f$ in the interval $(2^{p(k-1)},2^{p(k+l-1)})$, and we only remove or adjust pipes in this interval once.  Initially, removing pipes can only increase $f(x)$, but then we reduce $\delta_k$ by a factor of at most 3, which may decrease $f(x)$ by a factor of at most 3.
\end{description}

Now, we must bound the potential increase in $L$.  
To avoid confusion due to relabeling indexes after removing pipes, we change notation slightly.  Suppose the procedure completes after $K'$ iterations.  
Let $\alpha'_{p'(0)}, \ldots, \alpha'_{p'(K'-1)}$ be the final non-zero $\alpha$'s, and $\alpha_{p(0)}, \ldots, \alpha_{p(K-1)}$ the original $\alpha$'s.  
For $0 \le k \le K'-1$ let $\alpha_{p(s_k)},\ldots,\alpha_{p(s_{k+1}-1)}$ be the $L$-terms affected by the $k$th iteration of the procedure: either they are removed and merged into $\alpha'_{p'(k)}$ or $\alpha'_{p'(k)} = \alpha_{p(s_k)}$ if the constraint is already satisfied.
We need to analyze how mass is shifted between terms in $L$.  Define $L_k = \sum_{i=s_k}^{s_{k+1}-1} \alpha_{p(i)}A_{p(i)}(T^*_{p(i)})$, the portion of $L$ that round $k$ affects.

Consider round $k$ in which we remove old pipes $s_k+1, \ldots, s_{k+1}-1$ and adjust $\delta'_k$.  The old $\delta_{s_{k+1}}$ becomes $\delta_{k+1}'$.  Rotating $\delta_{k}'$ increases $\alpha'_{p'(k-1)}$ because $\alpha'_{p'(k-1)} = \delta'_{k-1} - \delta'_{k}$ but reduces the total $\alpha$-mass above $p'(k-1)$ because $\delta'_{k} = \sum_{j \ge k} \alpha'_{p(j)}$, decreasing $L$.  
The remaining $\alpha$-mass on $\alpha_{p(s_k)}A_{p(s_k)}(T_{p(s_k)}^*), \ldots$, $\alpha_{p(s_{k+1}-1)}A_{p(s_{k+1}-1)}(T_{p(s_{k+1}-1)}^*)$ merges into $\alpha'_{p'(k)}A_{p'(k)}(T_{p'(k)}^*)$ where $p'(k)$ is somewhere between $p(s_k)$ and $p(s_{k+1})$.
If mass from some $\alpha_{p(i)}$ moves down to $\alpha'_{p'(k)}$ where $p'(k) < p(i)$, then we can ignore it, as it will only reduce $L$.  If it moves up, then we will charge the increase to some higher term in $L$.

Let $c_\delta < \frac{\gamma}{3}$ be some small constant.  There are $2$ cases to consider: either $\delta_{s_{k+1}} \ge c_\delta\delta'_{k}$ or $\delta_{s_{k+1}} < c_\delta\delta'_{k}$.  
\begin{description}
\item{\emph{Case 1}}: $c_\delta\delta_{k}' > \delta_{s_{k+1}} = \delta'_{k+1}$.

Intuitively, this means there is a big drop between $\delta_{s_{k+1}-1} \ge \frac{\gamma}{3}\delta'_{k}$ and $\delta_{s_{k+1}} < c_\delta\delta'_{k}$, 
so $\alpha_{p(s_{k+1}-1)}$ must be fairly large: $\alpha_{p(s_{k+1}-1)} = \delta_{s_{k+1}-1} - \delta_{s_{k+1}} \ge (\frac{\gamma}{3} -c_\delta)\delta_{k}'$.  We will charge any increase in $L$ this iteration to the term $\alpha_{p(s_{k+1}-1)}A_{p(s_{k+1}-1)}(T_{p(s_{k+1}-1)}^*)$.
Note that we are always in this case when we remove the last pipe because we can view the last pipe as intersecting a dummy pipe with $\delta = 0$ at $D$.

In order to bound $\alpha'_{p'(k)}A_{p'(k)}(T^*_{p'(k)})$ by $\alpha_{p(s_{k+1}-1)}A_{p(s_{k+1}-1)}(T_{p(s_{k+1}-1)}^*)$
we must show that $p(s_{k+1}-1) \ge p'(k)$.
Note $2^{p'(k)}$ is the cost at which the new, rotated pipe $k$ surpasses the old pipe $s_{k+1}$.  
New pipe $k$ intersects pipe $s_{k+1}-1$ before $s_{k+1}$, and $\delta'_k > \delta_{s_{k+1}-1}$,
so pipes $k$ and $s_{k+1}$ meet before $s_{k+1}-1$ and $s_{k+1}$ do.  
Therefore $g \le 2^{p(s_{k+1}-1)}$,
and when we reduce $\delta_k'$ to fix the breakpoint we never need to raise $g$ beyond $2^{p(s_{k+1}-1)}$ before hitting a power of $2$.  Therefore

\begin{align*}
\alpha_{p(s_{k+1}-1)}A_{p(s_{k+1}-1)}(T^*_{p(s_{k+1}-1)}) &\ge \left(\frac{\gamma}{3}-c_\delta\right)\delta_{k}'A_{p(s_{k+1}-1)}(T^*_{p(s_{k+1}-1)}) && \text{(by assumption)}\\
&\ge \left(\frac{\gamma}{3}-c_\delta\right)\alpha'_{p'(k)}A_{p(s_{k+1}-1)}(T^*_{p(s_{k+1}-1)}) && \text{(using $\delta_k' = \sum_{j \ge k} \alpha_{p(j)}'$)}\\
&\ge \left(\frac{\gamma}{3}-c_\delta\right)\alpha'_{p'(k)}A_{p'(k)}(T^*_{p'(k)})
\end{align*}
We can charge the increase in $\alpha'_{p'(k)}$ to $\alpha_{p(s_{k+1}-1)}$ in the current chunk $L_k$, with a loss of $\left(\frac{\gamma}{3}-c_\delta\right)^{-1} = \frac{3}{\gamma-3c_\delta}$, and this charge can only occur once for each $L_k$.

\item{\emph{Case 2}:} $c_\delta\delta_{k}' \le \delta_{s_{k+1}}$.

In this case there is no large collection of mass that we can easily guarantee is above $p'(k)$ in the current interval, but we do know there must be a lot of mass somewhere above $p(s_{k+1}-1)$ because $\delta_{s_{k+1}}$ is large.  
The $\alpha$-mass $\alpha_{p(s_{k+1})}+ \ldots + \alpha_{p(s_{k+2}-1)} = \delta_{s_{k+1}} - \delta_{s_{k+2}}$ is ``used'' in the next iteration and contributes $L_{k+1}$ to $L$.
We know $\gamma\delta_{s_{k+1}} = \gamma\delta'_{k+1}  > \delta_{s_{k+2}}$, which implies $\sum_{i=s_{k+1}}^{s_{k+2}-1} \alpha_{p(i)} = \delta_{s_{k+1}} - \delta_{s_{k+2}} > (1 - \gamma)\delta_{s_{k+1}}$.
Now we can bound the increase
\begin{align*}
\alpha'_{p'(k)}A_{p'(k)}(T^*_{p'(k)}) & \le \left(\delta'_{k} - \delta_{s_{k+1}}\right)A_{p'(k)}(T^*_{p'(k)}) &&
(\alpha'_{p'(k)} = \delta'_k - \delta_{s_{k+1}}) \\
&\le \left(\frac{1}{c_\delta} - 1\right)\delta_{s_{k+1}}A_{p'(k)}(T^*_{p'(k)}) && 
\text{(by assumption)}\\
&\le \left(\frac{1}{c_\delta} - 1\right)\left(\frac{1}{1-\gamma}\sum_{i=s_{k+1}}^{s_{k+2}-1} \alpha_{p(i)}\right)A_{p'(k)}(T^*_{p'(k)}) && 
\text{(shown above)}\\
&\le \frac{1-c_\delta}{c_\delta(1-\gamma)}
\sum_{i=s_{k+1}}^{s_{k+2}-1} \alpha_{p(i)}A_{p(i)}(T^*_{p(i)}) && 
(p'(k) < p(s_{k+1}) \le p(i) \\
&= \frac{1-c_\delta}{c_\delta(1-\gamma)}L_{k+1} &&
\end{align*}
Therefore we can charge the increase in $L$ due to iteration $k$ to the portion $L_{k+1}$ used in the next iteration. 
\end{description}

For a particular segment $L_k$ of $L$, the $k-1$th iteration may been bounded by $\frac{1-c_\delta}{c_\delta(1-\gamma)}$ increase in $L_k$, and the $k$th iteration may charge against a $\frac{3}{\gamma-3c_\delta}$ increase.  Each type of charge can occur at most once per chunk.  Therefore the total increase in each piece, and hence the total increase in $L = \sum_k L_k$ is
\[
\frac{1-c_\delta}{c_\delta(1-\gamma)} + \frac{3}{\gamma-3c_\delta}
\]
This completes the proof.
\end{proof}

\newtheorem*{thm_sigma}{Lemma \ref{sigmalem}}

\begin{thm_sigma} 
Given $\valpha$ satisfying $\frac{\sigma_{K-1}}{\delta_{K-1}} \le D$ and $\delta_{k+1} < \gamma\delta_{k}$,
we can find $\valpha'$ such that such the corresponding $f'$,$L'$, $\delta'$, $\sigma'$ satisfy
$f(x) \le \frac{5}{2}f'(x)$, $L'  = O(L)$, $\delta'_{k+1} < \gamma\delta'_{k}$, and $\sigma_k' < \gamma\sigma_{k+1}'$ for all $k$.
\end{thm_sigma}

\begin{proof}  
The proof follows Lemma \ref{delta_lem} but moves backwards through the pipes rather than forwards.
\begin{description}
\item{1. \emph{Deletion Step:}}
Let $k$ be the highest index such that $\sigma_{k-1} \ge \gamma\sigma_k$, and $l>1$ the smallest integer such that $\sigma_{k-l} < \frac{2\gamma}{5}\sigma_k$.  Such an $l$ must exist because $\sigma_0 = 0$.  
Remove pipes $k-l+1, \ldots, k-1$, and replace them with the cheaper of pipes $k-l$ and $k$.

\item{2. \emph{Rotation Step:}}
As in Lemma \ref{delta_lem}, $f(x)$ may no longer be a linear combination of terms $A_i(x)$ because the new indifference point may not be a power of 2.  We use a similar procedure as before to remedy this.
Hold pipe $k$'s cost for $2^{p(k)}$ flow fixed, and reduce $\sigma_k$ while increasing $\delta_k$ to maintain the invariant until $k$ and $k-l$ meet at a power of 2.  Geometrically we are rotating $y = \sigma_k + \delta_kx$ counter-clockwise around $(2^{p(k)},\sigma_k + \delta_k2^{p(k)})$.  Let $\sigma'_k$, $\delta'_k$ be the new parameters. Note that $\valpha'$ and $L'$ are now well-defined.
\end{description}

First, we analyze the change to $\sigma_k$ and $\delta_k$ required by the rotation step and use this result to prove the constraints on both the $\sigma$'s and $\delta$'s are satisfied at the end without changing $f(x)$ too much.

\begin{description}
\item{\emph{Claim:}}
After rotation $\sigma'_k \ge \frac{2}{5}\sigma_k$, and $\delta'_k \le \frac{8}{5}\delta_k$.

Suppose the unmodified pipe $k$ and $k-l$ meet at $g = \frac{\sigma_k - \sigma_{k-l}}{\delta_{k-l} - \delta_k}$.  We will bound the adjustment required to guarantee they meet before $\frac{g}{2}$.  
Reduce $\sigma_k$ to $\frac{2}{5}\sigma_k = \sigma'_k$.
The modified pipe $k$ has the same cost as the old at $2^{p(k)}$.
If $k$ is the final pipe then from Lemma \ref{max_capacity_lem} we know D = $2^{p(k)} \ge \frac{\sigma_k}{\delta_k}$.  
Otherwise, pipe $k$ costs the same as $k+1$ at $2^{p(k)}$, so we have that $2^{p(k)} = \frac{\sigma_{k+1}-\sigma_k}{\delta_k - \delta_{k+1}} \ge \frac{\sigma_k}{\delta_k}$, using $\gamma\sigma_{k+1} > \sigma_k$ (the constraint fixed in the previous iteration).  In either case $\delta_k2^{p(k)} \ge \sigma_k$.
Now,
\begin{align*}
\sigma_k + \delta_k2^{p(k)} &= \frac{2}{5}\sigma_k + \delta'_k2^{p(k)}  \\
\Rightarrow \delta'_k2^{p(k)} = \frac{3}{5}\sigma_k + \delta_k2^{p(k)} &\le \left(1+\frac{3}{5}\right)\delta_k2^{p(k)} 
\Rightarrow \delta'_k \le \frac{8}{5}\delta_k
\end{align*}

The constraints on the $\delta$s were satisfied before removing pipe $k-1$, so $\delta_{k-l} > \frac{1}{\gamma^2}\delta_k$.  This implies 
\[
\frac{\delta_k}{\delta_{k-l}-\delta_k} \le \frac{\delta_{k}}{\frac{1}{\gamma^2}\delta_{k}- \delta_{k}} \le \frac{1}{4-1} = \frac{1}{3}
\]
using $\gamma < \frac{1}{2}$.
We combine this with the bound on $\delta'_k$ to bound the change in $\delta_{k-l} - \delta_k'$: 
\begin{align*}
\delta_{k-l} - \delta'_k \ge (\delta_{k-l} - \delta_k) - \frac{3}{5}\delta_k =& (\delta_{k-l}-\delta_k)\left(1-\frac{3}{5}\frac{\delta_k}{\delta_{k-l}-\delta_k}\right) \\
\ge& (\delta_{k-l}-\delta_k)\left(1-\frac{3}{5}\cdot\frac{1}{3}\right) = \frac{4}{5}(\delta_{k-l}-\delta_k)
\end{align*}

Now we have enough information to bound the new switchover point.
\[
\frac{\sigma'_k -\sigma_{k-l}}{\delta_{k-l} - \delta'_k} 
= \frac{\frac{2}{5}\sigma_k -\sigma_{k-l}}{\delta_{k-l} - \delta'_k} 
\le \frac{\frac{2}{5}(\sigma_k -\sigma_{k-l})}{\delta_{k-l} - \delta'_k}
\le \frac{\frac{2}{5}(\sigma_k -\sigma_{k-l})}{\frac{4}{5}(\delta_{k-l} - \delta_k)}
= \frac{1}{2}\frac{\sigma_k -\sigma_{k-l}}{\delta_{k-l} - \delta_k} 
= \frac{1}{2}g
\]
There must be a power of 2 between $\frac{g}{2}$ and $g$, so we need to reduce $\sigma_k$ by at most a factor of $\frac{2}{5}$.
Finally, note that pipes 0 and 1 meet no sooner than 1, and $k>1$ since it is always true that $\gamma\sigma_1 > \sigma_0 = 0$.  Therefore $g > 1$, and hence the new changeover point is at least 1, so we do not need to worry about a term $A_{-1}$.

\item{\emph{Claim:}} When the procedure finishes $\delta'_{k+1} < \gamma\delta_k'$ and $\sigma_k' < \gamma\sigma_{k+1}'$ for all $k$.

We chose $l$ such that $\sigma_{k-l} < \frac{2\gamma}{5}\sigma_k$, so $\sigma_{k-l} < \gamma\sigma_k'$.  Before starting, we had $\gamma^2\delta_{k-l} > \gamma\delta_{k-1} > \delta_k$, and $\gamma < \frac{1}{2}$, which implies $\delta_k' \le \frac{8}{5}\delta_k < \frac{8}{5}\gamma^2\delta_{k-l} < \gamma\delta_{k-l}$.  Note that the rotation step does not break any previously-satisfied constraints on larger $k$'s.

\item{\emph{Claim:}} For all $x$, $f(x) \le \frac{5}{2}f'(x)$.

Only 1 round affects the interval $(2^{p(k-l-1)},2^{p(k)})$.  Removing pipes only increases $f(x)$, and if we adjust $\sigma_k$, then it decreases by a factor of at most $\frac{2}{5}$, while $\delta_k$ increases, so $f'(x) \ge \frac{2}{5}f(x)$.
\end{description}

Now we analyze the increase in $L$.  
First, unlike in Lemma \ref{delta_lem}, the rotation step works against us, and we need to bound the increase.
\begin{description}
\item{\emph{Claim:}}
Rotation only increases $L$ by an $O(1)$-factor.

When adjusting pipe $k$, we increase $\delta_k$ without changing $\delta_{k+1}$, which increases $\alpha_{p(k)}$.  We have that $\alpha_{p(k)} \ge (1-\gamma)\delta_k$, and $\delta'_k \le \frac{8}{5}\delta_k$, so
\[
\alpha'_{p(k)} = \delta'_k - \delta_{k+1} \le (\delta_k - \delta_{k+1})\left(1 + \frac{3}{5}\frac{\delta_k}{\delta_k - \delta_{k+1}}\right) \le \alpha_{p(k)}\left(1 + \frac{3}{5}\frac{\delta_k}{\delta_k(1-\gamma)}\right) = \frac{8-5\gamma}{5(1-\gamma)}\alpha_{p(k)}
\]
causing $L$ to increase by at most $\frac{8-5\gamma}{5(1-\gamma)}$.
\end{description}

Second, we need to bound the increase in $L$ caused by removing pipes.
Let $K'$ be the number of iterations and final pipes and $\alpha'_{p'(0)},\ldots,\alpha_{p'(K'-1)}'$ the resulting non-zero $\alpha$'s.  Iteration $k$, for $1 \le k \le K'$, deletes pipes $s_{k+1}+1, \ldots, s_{k}-1$ which removes $\alpha_{p(s_{k+1})},\ldots, \alpha_{p(s_k-1)}$.  
Let $L_k = \sum_{i=s_{k+1}}^{s_k-1} \alpha_{p(i)}A_{p(i)}(T^*_{p(i)})$ be the amount these contribute to $L$.
Since it moves backwards through pipes the indices of new pipes are not fixed yet, but as labeled at the end, round $k$ ensures $\sigma'_{j} < \gamma\sigma'_{j+1}$ and creates a term $\alpha'_{p'(j)}$ where $j = K' - k$.
  
The rotation step reduces both $\alpha'_{p'(j)}$ and $p'(j)$ which can only help in this step,
and we have already bounded the increase in $\alpha'_{p'(j+1)}$ due to rotation, so we assume that no rotation is needed.  
This implies $\alpha'_{p'(j)} = \delta_{s_{k+1}} - \delta_{s_k} = \sum_{i=s_{k+1}}^{s_k-1} \alpha_{p(i)}$.  
As in Lemma \ref{delta_lem} we need to ensure that too much $\alpha$-mass does not move too high.

Let $c_{\sigma} < \frac{2\gamma}{5}$ be a small constant.  We need to consider two cases again: either $\sigma_{s_{k+1}} < c_{\sigma}\sigma'_{j+1}$ or $\sigma_{s_{k+1}} \ge c_{\sigma}\sigma'_{j+1}$.

\begin{description}
\item{\emph{Case 1}:} $\sigma_{s_{k+1}} < c_{\sigma}\sigma'_{j+1}$.

Intuitively, this means $\sigma_{s_{k+1}+1}$ is much larger than $\sigma_{s_{k+1}}$ because $\sigma_{s_{k+1}+1} \ge \frac{2\gamma}{5}\sigma'_{j+1}$, 
so by the time pipe $s_{k+1}$ catches up with pipe $s_{k+1}+1$ or any later pipe, it has already covered an $O(1)$-fraction of the distance to $2^{p'(j)}$.  Therefore, pushing mass from up to $A_{p'(j)}(T_{p'(j)}^*)$ increases $L$ by only a constant factor.

We bound $2^{p'(j)}$ by bounding the cost to which pipe $s_{k+1}$ must grow before switching pipes.
Before removal the old pipe $s_k-1$ crossed the new $j+1$ at $2^{p(s_k-1)} = \frac{\sigma_{j+1}' - \sigma_{s_k-1}}{\delta_{s_k-1}-\delta'_{j+1}} \le \frac{\sigma_{j+1}'}{\frac{1}{\gamma}\delta_{j+1}' - \delta_{j+1}'} \le \frac{\sigma_{j+1}'}{\delta_{j+1}'}$, 
so $\sigma_{j+1}' + \delta_{j+1}'2^{p(s_k-1)} \le 2\sigma_{j+1}'$.  Pipe $s_{k+1}$'s cost increases faster than $s_k-1$'s and surpasses $s_k$'s cost before $2^{p(s_k-1)}$.  Therefore $\sigma_{j+1}' + \delta_{j+1}'g \le 2\sigma_{j+1}'$.  

We know $\sigma_{s_{k+1}+1} \ge \frac{2\gamma}{5}\sigma_{j+1}'$ or else it would not have been removed.  When $s_{k+1}$ intersects $s_{k+1}+1$ at $2^{p(s_{k+1})}$ it has grown from $\sigma_{s_{k+1}}$ to at least $\sigma_{s_{k+1}+1}$ and therefore has covered at least
\[
\frac{\sigma_{s_{k+1}+1} - \sigma_{s_{k+1}}}{2\sigma_{j+1}'} \ge \frac{\frac{2\gamma}{5}\sigma_{j+1}' - c_\sigma\sigma_{j+1}'}{2\sigma_{j+1}'} = \frac{2\gamma-5c_\sigma}{10}
\]
fraction of the distance to the indifference point between $s_{k+1}+1$ and $s_{k}$.
Therefore
\[
2^{p(s_{k+1})} \ge \frac{2\gamma-5c_\sigma}{10}2^{p'(j)} \Rightarrow
A_{p'(j)}(T_{p'(j)}^*) \le \frac{10}{2\gamma-5c_{\sigma}}A_{p(s_{k+1})}(T_{p(s_{k+1})}^*)
\]
Every other affected $\alpha_{p(i)}$ is pushed up less than $\alpha_{p(s_{k+1})}$, so
\begin{align*}
\alpha_{j}'A_{p'(j)}(T^*_{p'(j)}) &= \sum_{i=s_{k+1}}^{s_k-1} \alpha_{p(i)}A_{p'(j)}(T^*_{p'(j)}) \\
 &\le \sum_{i=s_{k+1}}^{s_k-1} \alpha_{p(i)}\left(\frac{10}{2\gamma-5c_{\sigma}}A_{p(i)}(T^*_{p(i)}) \right) 
 = \frac{10}{2\gamma-5c_{\sigma}}L_k
\end{align*}

\item{\emph{Case 2}:} $\sigma_{s_{k+1}} \ge c_{\sigma}\sigma'_{j+1}$.

In this case pipes $s_{k+1}$ and $s_{k+1}+1$ may meet very early, and $A_{p'(j)}(T_{p'(j)}^*)$ could be much bigger than $A_{p(s_{k+1})}(T_{p(s_{k+1})}^*)$.  Note that we are never in this case when $\sigma_{s_{k+1}} = 0$.  We have that
\begin{gather*}
\sigma_{j+1}' + \delta_{j+1}2^{p'(j)} = \sigma_{s_{k+1}} + \delta_{s_{k+1}}2^{p'(j)} \\
\Rightarrow
\alpha'_{p'(j)} = \delta_{s_{k+1}} - \delta_{j+1}' = \frac{\sigma_{j+1}' - \sigma_{s_{k+1}}}{2^{p'(j)}} \le \left(\frac{1}{c_\sigma}-1\right) \frac{\sigma_{s_{k+1}}}{2^{p'(j)}}
\end{gather*}
After the next round---which we know occurs because $\sigma_{s_{k+1}} \neq 0$---
$\sigma_{s_{k+2}}$ will be the pipe preceding $\sigma_{s_{k+1}}$ (which is $\sigma_j'$).
Using $\sigma_{s_{k+2}} < \gamma\sigma_{s_{k+1}}$, it is easy to see that
$
\sigma_{s_{k+1}} < \frac{\sigma_{s_{k+1}}-\sigma_{s_{k+2}}}{1-\gamma}
$
and from the formula for $\sigma_{s_{k+2}}$ we have
$
\sigma_{s_{k+1}} - \sigma_{s_{k+2}} = \sum_{i=s_{k+2}}^{s_{k+1}-1} \alpha_{p(i)}2^{p(i)}
$

Combining the previous inequalities,
\begin{align*}
\alpha'_{p'(j)} \le& \left(\frac{1-c_\sigma}{c_\sigma}\right)\frac{\sigma_{s_{k+1}}}{2^{p'(j)}} 
\le  \left(\frac{1-c_\sigma}{c_\sigma}\right)\left(\frac{\sigma_{s_{k+1}}-\sigma_{s_{k+2}}}{1-\gamma}\right)\frac{1}{2^{p'(j)}} \\
\le& \frac{1-c_\sigma}{c_\sigma(1-\gamma)}\sum_{i=s_{k+2}}^{s_{k+1}-1} \alpha_{p(i)}2^{p(i)-p'(j)}\\
\end{align*}
Now we can apply Lemma \ref{L_term_lem} to finish the bound:
\begin{align*}
\alpha'_{p'(j)}A_{p'(j)}(T_{p'(j)}^*) \le& \frac{1-c_\sigma}{c_\sigma(1-\gamma)}\sum_{i=s_{k+2}}^{s_{k+1}-1} \alpha_{p(i)}2^{p(i)-p'(j)}A_{p'(j)}(T_{p'(j)}^*) \\
\le& \frac{1-c_\sigma}{c_\sigma(1-\gamma)}\sum_{i=s_{k+2}}^{s_{k+1}-1} \alpha_{p(i)}A_{p(i)}(T_{p(i)}^*) = \frac{1-c_\sigma}{c_\sigma(1-\gamma)}L_{k+1}
\end{align*}
Therefore we can charge the increase in $L_k$ this iteration to $L_{k+1}$ used in the next iteration.
\end{description}

For a particular chunk $L_k$ of $L$, round $k$'s increase may be bounded by a $\frac{10}{2\gamma-5c_\sigma}$-factor increase and round $k-1$ may be bounded by a $\frac{1-c_\sigma}{c_\sigma(1-\gamma)}$-factor increase.  Each charge only occurs once.
The rotation step adds another factor of $\frac{8-5\gamma}{5(1-\gamma)}$ on top of this.
Therefore, the total growth of $L$ is at most
\[
\frac{8-5\gamma}{5(1-\gamma)} \left(\frac{1-c_\sigma}{c_\sigma(1-\gamma)} + \frac{10}{2\gamma-5c_\sigma}\right)
\]
\end{proof}
\end{document}